\documentclass[12pt]{article}
\usepackage{_preamble}
\usepackage[subtle]{savetrees}

\begin{document}

\title{Integrating Diagnostic Checks into Estimation\thanks{We thank Isaiah Andrews, Anna Mikusheva, and Alberto Abadie for their guidance and support, as well as MIT lunch seminar participants for helpful comments.}}
\author{Reca Sarfati\thanks{Department of Economics, Massachusetts Institute of Technology, sarfati@mit.edu.} \and Vod Vilfort\thanks{Department of Economics, Massachusetts Institute of Technology, vod@mit.edu.}}
\date{April 17, 2026}
\maketitle

\begin{abstract}
\noindent Empirical researchers often use diagnostic checks to assess the plausibility of their modeling assumptions, such as testing for covariate balance in RCTs, pre-trends in event studies, or instrument validity in IV designs. While these checks are traditionally treated as external hurdles to estimation, we argue they should be integrated into the estimation process itself. In particular, we propose residualizing one's baseline estimator against the vector of diagnostic check statistics to remove the component of baseline sampling variation explained by the diagnostic checks. This residualized estimator offers researchers a ``free lunch,'' delivering three properties simultaneously: (i) eliminating inference distortions from check-based selective reporting; (ii) reducing variance without changing the estimand when the baseline model is correctly specified; and (iii) minimizing worst-case bias under bounded local misspecification within the class of linear adjustments. We apply our method to the RCT in \cite{kaur2024_financialconcerns} and find that, even in a setting where all balance checks pass comfortably, residualization increases the magnitude of the baseline point estimate and reduces its standard error, equivalent to approximately a 10\% increase in sample size.
\end{abstract}

\section{Introduction}
Empirical researchers often use diagnostic checks to guide their choice of estimator. For example, in randomized controlled trials (RCTs), it is common to check for successful randomization by testing for covariate ``balance,'' which examines differences in pre-treatment characteristics across treatment arms \citep{banerjee2017decision, banerjee2020theory}. When such covariates appear imbalanced (e.g., the difference in mean covariates $\hat \gamma = \bar X_1 - \bar X_0$ between the treatment and control groups is sufficiently far from zero), researchers might depart from their initial estimator (e.g., the difference in mean outcomes $\hat{c} = \Bar{Y}_{1} - \Bar{Y}_{0}$ between treatment and control) to adjust outcomes or include controls to address the imbalance. As another example, in difference-in-differences (DiD) and event-study designs, researchers routinely assess the plausibility of parallel trends by examining ``pre-trends'' \citep{roth2022pretest}. Evidence of nonzero pre-trends---for instance, statistically significant coefficients on leads of the treatment indicator---may lead the researcher to modify the specification (e.g., alter the set of controls, change the event window, reweight the sample, etc.). As a further example, in instrumental variables (IV) designs, researchers often report checks of instrument validity, such as balance of predetermined covariates across instrument values or reduced-form effects of the instrument on predetermined outcomes \citep{danieli2026negative}. When such checks yield evidence against the maintained assumptions---for example, when predetermined covariates are imbalanced across instrument values---researchers may similarly modify the specification, such as dropping controls, switching instruments, or restricting the sample to a subgroup where the instrument seems more plausible.

To date, such diagnostic checks have largely been treated as external hurdles to estimation. In this paper, we argue that diagnostic checks should be integrated into the estimation process itself. Specifically,  given a baseline estimator $\hat{c}$ and a vector of jointly asymptotically normal and mean-zero diagnostic checks $\hat{\gamma}$, we propose an estimator $\hat{c}_{R}$ which ``residualizes'' $\hat{\gamma}$ from $\hat{c}$ based on the linear regression of $\hat{c}$ on $\hat{\gamma}$ under their asymptotic normal distribution. In particular, the residualized estimator $\hat{c}_{R}$ takes $\hat{c}$ and subtracts the linear combination of $\hat{\gamma}$ induced by the regression coefficients. Intuitively, the linear adjustment $\hat{c}_{R}$ integrates diagnostic checks into estimation by accounting for the first-order dependence between $\hat{c}$ and $\hat{\gamma}$. 

We show that the residualized estimator $\hat{c}_{R}$ comes with a ``free lunch,'' delivering three properties simultaneously. First, it removes first‑order sensitivity to the check, thereby eliminating bias from selective reporting of the baseline estimator based on passing diagnostic checks. That is, the distribution of $\hat{c}_R$ conditional on ``passing'' a check based on $\hat\gamma$ remains asymptotically normal, with the same mean and variance as its unconditional distribution. In this sense, $\hat{c}_{R}$ is pre-test independent.
Second, under standard regularity conditions, semiparametric efficiency requires orthogonality to any auxiliary statistic whose population value is zero under the base model. For example, in an RCT the covariate imbalance $\hat\gamma=\bar X_1-\bar X_0$ has population value $E[X\mid T=1]-E[X\mid T=0]=0$ under the baseline of random assignment; an efficient average treatment effect (ATE) estimator therefore cannot co-move with $\hat\gamma$ to first order. Otherwise one can strictly reduce variance by residualizing on the check. In particular, residualization is the variance-minimizing linear adjustment of $\hat c$ on $\hat\gamma$; thus, whenever $\hat c$ is not orthogonal to $\hat\gamma$, the linear adjustment $\hat{c}_{R}$ strictly improves large-sample efficiency without changing the estimand under the base model. 
Third, residualization minimizes worst-case bias under bounded local misspecification, which, in combination with our second result, implies that the residualized estimator $\hat{c}_{R}$ is minimax-optimal under squared error loss within the class of linear adjustments.
Below we position these three properties relative to existing literature and results.

\paragraph{Pretesting.}
There are a number of papers that consider the properties of diagnostic checks, with many pointing out issues of power deficiency and inference distortions from pretesting \citep{hartman2018equivalence, freyaldenhoven2019pre, kahn2020promise, roth2022pretest, RambachanRoth2023, bicalho2025power, mikhaeil2025defense, bilinski2026nothing}. We do not address the issue of power, opting instead to take the choice of diagnostic checks as given; efforts to develop better diagnostic checks are complementary to our framework \citep{hartman2018equivalence, bicalho2025power, bilinski2026nothing}. We do, however, consider the issue of pretesting, for which existing solutions include conditional inference corrections and methods that avoid pretesting altogether. Examples of conditional inference applied to RCTs include \cite{andrews2024inference} and \cite{sarfati2025post}, while applications to event studies include \citet{roth2022pretest} and \citet{mikhaeil2025defense}. Examples of approaches that avoid pretesting altogether include \citet{manski2018right}, \citet{freyaldenhoven2019pre}, and \citet{RambachanRoth2023}. Our residualization approach has the benefit of (i) being agnostic to the form of pretesting and (ii) not requiring the specification of bounds, ambiguity sets, or other tuning parameters.\footnote{For example, the conditional inference approach based on truncated Gaussian pivots \citep{fithian2014optimal, lee2016exact, andrews2024inference, sarfati2025post} would require specification of the conditioning set of a given pretesting procedure.} This is useful when there is uncertainty or disagreement over the inputs to (i) and (ii). For example, there can be disagreement over the acceptable threshold for covariate balance \citep{banerjee2020theory}.

\paragraph{Efficiency.} The orthogonality property of efficient estimators is well-established in the literature on semiparametric efficiency \citep{fisher1925theory, bickel1993efficient, newey1994large, van2000asymptotic} and is the basis of  \citet{hausman1978specification} specification tests. Recent work highlighting this property is \citet{borusyak2024revisiting}, who show orthogonality of their efficient event-study imputation estimator to pre-trend coefficients. We show that such orthogonality can be obtained more generally starting from \textit{any} pair of jointly asymptotically normal baseline estimator $\hat{c}$ and diagnostic checks $\hat{\gamma}$ in a wide variety of contexts, including those highlighted in the introduction. In the case of an RCT with $\hat{c} = \Bar{Y}_{1} - \Bar{Y}_{0}$ and $\hat{\gamma} = \Bar{X}_{1} - \Bar{X}_{0}$, the residualized estimator $\hat{c}_{R}$ coincides with the efficient covariate adjustment proposed by \citet{lin2013agnostic}. This type of adjustment is also advanced by \citet{roth2023efficient} for event-study designs with random treatment timing rather than parallel trends. Our framework speaks more generally to any setting where researchers accompany their baseline estimator $\hat{c}$ with auxiliary statistics $\hat{\gamma}$ whose population value should be zero under the base model.

\paragraph{Misspecification.}
Researchers use diagnostic checks to assess deviations from correct model specification, such as failures of randomization in RCTs or violations of parallel trends in event studies. We model the bias from such deviations as a local-to-zero bias term in the researcher's baseline estimator. This follows a large literature on local misspecification \citep{newey1985generalized, conley2012plausibly, andrews2017measuring, andrews2020informativeness, armstrong2021sensitivity, bonhomme2022minimizing, andrews2025purpose}. The local misspecification approach considers bias on the same asymptotic order as sampling variance, which facilitates tractable analysis of bias-variance tradeoffs. We show that in our setting, the worst-case squared bias of a linear adjustment is proportional to its variance, meaning that the variance-minimizing residualization also minimizes worst-case bias. This ``double optimality'' relies on measuring the size of misspecification in terms of a bound on the $L^{2}$ norm of the local misspecification score function, with no further restrictions on misspecification.\footnote{This is without much loss of generality: as shown by \citet{andrews2020informativeness}, bounding the $L^{2}$ norm of the score is asymptotically equivalent to bounding any divergence in the \citet{cressie1984multinomial} family, which includes Kullback--Leibler divergence and Hellinger distance as special cases.} A recent paper that finds a similar double-optimality result when allowing for analogous forms of unrestricted misspecification is \citet{adusumilli2026you}.

The rest of this paper is organized as follows. Section~\ref{sec:setting} formalizes the setting and establishes notation. Section~\ref{sec:implementation} discusses the practical implementation of our procedure. Section~\ref{sec:result1_selection} presents our first result, which is that residualization eliminates the inference distortions from selective reporting based on the passing of diagnostic checks. Section~\ref{sec:result2_efficiency} presents our second result, which is that orthogonality to diagnostic checks is implied by semiparametric efficiency when the base model is correct, which we use to establish the variance reduction from residualization. Section~\ref{sec:result3_minimax} presents our third result, which is that residualization also minimizes worst-case bias under bounded local misspecification within the class of linear adjustments. Section~\ref{sec:empirics} applies our procedure to the example of balance checks for the RCT in \citet{kaur2024_financialconcerns}. Section~\ref{sec:conclusion} concludes. Appendix~\ref{sec:app_a} presents proofs for our results. Appendix~\ref{sec:app_b} provides additional details, remarks, and worked examples.

\section{Setting and Notation}\label{sec:setting}
We consider the following setting, borrowing the terminology of \cite{andrews2025purpose}. A researcher posits an \textit{economic model} $\mathcal M\subseteq\mathbb R\times\Delta(\mathcal D)$, where each element $(c,P)\in\mathcal M$ pairs a scalar parameter of interest $c$ (e.g., a treatment effect) with a distribution $P$ over the observed data $D\in\mathcal D$ on sample space $\mathcal{D}$. 
For instance, in an RCT with treatment $T\in\{0,1\}$ and observed outcome $Y$, the model collects all pairs $(c,P)$ such that $P$ is an observed-data distribution induced by some underlying potential-outcomes structure satisfying random assignment, and $c$ is the corresponding average treatment effect, identified under the model as $c(P)=E_P[Y\mid T=1]-E_P[Y\mid T=0].$
In an event-study design with pre- and post-treatment periods, the model collects all pairs $(c,P)$ such that $P$ is an observed-data distribution induced by some underlying potential-outcomes structure satisfying parallel trends and no anticipation, and $c$ is the corresponding identified average treatment effect on the treated.

The economic model induces a \textit{statistical model} $\mathcal P=\{P: (c,P)\in\mathcal M\text{ for some }c\}\subseteq \Delta(\mathcal D)$, within which we assume the parameter is identified as a functional $c=c(P)$.
The researcher would like to know whether the true (parameter, distribution) pair lies in $\mathcal M$---that is, whether the economic model is correctly specified. In practice, however, the data can only speak to the weaker question of whether $P\in\mathcal P$: the researcher observes $D_1,\ldots,D_n$ drawn i.i.d.\ from some true distribution $P^*$ and can test features of $P^*$, but cannot directly verify that there exists a parameter value $c^*$ such that $(c^*,P^*)\in\mathcal M$. Diagnostic checks target this testable implication. Formally, let $\hat\gamma$ be a $p_\gamma$-vector of statistics with population analogue $\gamma=\gamma(P)\in\mathbb R^{p_\gamma}$ satisfying $\gamma(P)=0$ for all $P\in\mathcal P$. Evidence that $\gamma(P^*)\neq 0$ is thus evidence of \textit{statistical} misspecification ($P^*\notin\mathcal P$), which in turn implies \textit{econometric} misspecification (i.e., there does not exist a parameter value $c^*$ such that $(c^*,P^*)\in\mathcal M$).

In practice, even when $P^{*}\in\mathcal P$, the finite-sample realization $\hat\gamma$ fluctuates around zero, and researchers must make judgment calls about which values of $\hat\gamma$ constitute evidence against the model. We take the choice of diagnostic check as given and ask how to use $\hat\gamma$ constructively---not only as a pass/fail diagnostic, but as an input to the estimator $\hat{c}$ itself.

We now formalize the asymptotic behavior of the estimator and the diagnostic check. For a given $P$, we let $L^{2}(P)$ denote the set of functions $f: \mathcal{D} \to \R$ with finite second moments $E_{P}[f^{2}] \equiv E_{P}[f^{2}(D)] < \infty$, and let $L_{0}^{2}(P) \subseteq L^{2}(P)$ denote the subset with mean zero $E_{P}[f] = 0$. Fix a base distribution $P_0\in\mathcal P$ so that $\hat c$ targets $c=c(P_0)\in\mathbb R$ and $\hat{\gamma}$ targets $\gamma = \gamma(P_{0}) \in \mathbb{R}^{p_{\gamma}}$. To lighten notation, we henceforth write $E_{0} = E_{P_{0}}$. We assume that $(\hat c, \hat\gamma)$ is jointly asymptotically linear: for influence functions $\phi_c \in L_0^2(P_0)$ and $\phi_\gamma\in L_0^2(P_0)^{p_\gamma}$,
\begin{equation}
\sqrt n \begin{pmatrix}
\hat c - c\\
\hat\gamma - \gamma \end{pmatrix}
=
\frac{1}{\sqrt n}\sum_{i=1}^n \begin{pmatrix}
\phi_c(D_i)\\
\phi_\gamma(D_i)
\end{pmatrix} +o_p(1), \quad \Sigma= \begin{bmatrix}
\sigma_c^2 & \Sigma_{c\gamma}\\
\Sigma_{\gamma c} & \Sigma_{\gamma\gamma}
\end{bmatrix}, \quad \Sigma_{\gamma\gamma} \succ 0, \quad \sigma_{c}^{2} > \Sigma_{c\gamma}\Sigma_{\gamma\gamma}^{-1}\Sigma_{\gamma c},
\label{eq:AL_model}
\end{equation}
where $E_0[\phi_c]=0$, $E_0[\phi_\gamma]=0$, and the blocks of the
joint asymptotic covariance $\Sigma$ are
\[
\sigma_c^2=E_0[\phi_c^2],\qquad 
\Sigma_{c\gamma}=E_0[\phi_c\phi_\gamma'],\qquad 
\Sigma_{\gamma\gamma}=E_0[\phi_\gamma\phi_\gamma'].
\] 
In particular, \eqref{eq:AL_model} implies joint asymptotic normality: $\sqrt{n}(\hat{c} - c, (\hat{\gamma} - \gamma)')' \xrightarrow{d} N(0, \Sigma)$. Note that the off-diagonal block $\Sigma_{c\gamma}$ measures the degree to which the baseline estimator $\hat{c}$ co-moves with the diagnostic check $\hat{\gamma}$ to first order, which will be a central component to our analysis.

We illustrate this framework in the context of an RCT balance check, which we will return to as a running example:

\begin{example}[RCT Balance Check]\label{ex:rct_balance}
A researcher conducts an RCT to estimate the effect of a program on an outcome $Y$, with binary treatment assignment $T\in\{0,1\}$ and predetermined covariates $X$. The economic model is
\[
\mathcal M = \left\{
(c,P)\in\mathbb R\times\Delta(\mathcal D) :
c = E_P[Y\mid T=1]-E_P[Y\mid T=0],
\quad E_P[X\mid T=1]=E_P[X\mid T=0] \right\}.
\]
This observed-data restriction is implied by random assignment in an underlying potential-outcomes model, under which $c$ is the average treatment effect. The induced statistical model
\[
\mathcal P=\{P:(c,P)\in\mathcal M \text{ for some } c\}
\]
therefore consists of observed-data distributions under which treatment and control groups are balanced in predetermined covariates, and within this model the parameter is identified as
\[
c(P)=E_P[Y\mid T=1]-E_P[Y\mid T=0].
\]
The researcher estimates $c(P)$ using the difference in means $\hat c=\bar Y_1-\bar Y_0$ and reports the balance statistic $\hat\gamma=\bar X_1-\bar X_0.$ Under any $P\in\mathcal P$,
\[
\gamma(P)=E_P[X\mid T=1]-E_P[X\mid T=0]=0,
\]
so $\hat\gamma$ targets a testable implication of the model.

Now consider possible deviations from $\mathcal M$, focusing on the case where $X$ is baseline income. Suppose, for redistributive reasons, administrators were more likely to assign treatment to worse-off households. This would generate an observed-data distribution $P_1$ under which treated and control households are no longer balanced in $X$, so that
\[
\gamma(P_1)=E_{P_1}[X\mid T=1]-E_{P_1}[X\mid T=0]<0.
\]
Alternatively, administrators may have been receptive to bribes, so that wealthier households were more likely to receive treatment. This would generate a distribution $P_2$ satisfying
\[
\gamma(P_2)=E_{P_2}[X\mid T=1]-E_{P_2}[X\mid T=0]>0.
\]
In either case, the balance check $\hat\gamma$ detects a violation of the observed-data restriction defining $\mathcal P$, so the true distribution lies outside the maintained statistical model. Since $\mathcal P$ is induced by $\mathcal M$, this also means that the maintained \textit{economic} model fails at the true data-generating process (DGP).

The challenge for inference arises even when the model is correctly specified. Under proper randomization ($P^*\in\mathcal P$), the population imbalance satisfies $\gamma(P^*) = 0$, but $\hat\gamma$ fluctuates around zero in finite samples. A common practice is to test for balance by computing a $t$-statistic for the null hypothesis $\gamma = 0$ and compare it to a conventional significance cutoff. If the test rejects---that is, if balance ``fails''---the researcher may change what they report, for example by adding controls, changing specifications, or discarding the RCT results entirely.
\end{example}

The workflow in Example \ref{ex:rct_balance} applies more generally to settings where researchers may wish to correct for the potential bias of a baseline estimator $\hat{c}$ after observing values of diagnostic statistics $\hat{\gamma}$. In the next section, we propose a residualized estimator $\hat{c}_{R}$ that provides a general correction for such settings.

\section{Implementation}\label{sec:implementation}

Based on \eqref{eq:AL_model}, we have the normal approximation
\begin{align*}
    \sqrt{n}
    \begin{pmatrix}
    \hat{c} - c \\
    \hat{\gamma} - \gamma
    \end{pmatrix}
    \sim 
    N(0, \Sigma), \quad \Sigma = 
    \begin{bmatrix}
    \sigma_{c}^2 & \Sigma_{c\gamma} \\
    \Sigma_{\gamma c} & \Sigma_{\gamma\gamma}
\end{bmatrix}.
\end{align*}
The implied linear regression of $\hat{c}$ on $\hat{\gamma}$ is given by the regression coefficient $\Lambda= \Sigma_{c\gamma}\Sigma_{\gamma\gamma}^{-1}$, which measures the sensitivity of the baseline estimator $\hat{c}$ to the diagnostic checks $\hat{\gamma}$. The residualized estimator is defined as the resulting regression residual
\begin{align*}
    \hat{c}_{R} = \hat{c} - \Lambda \hat{\gamma},
\end{align*}
which has a corresponding asymptotic variance
\begin{align*}
    \sigma_{R}^{2} = \sigma_{c}^{2} - \Sigma_{c\gamma}\Sigma_{\gamma\gamma}^{-1}\Sigma_{\gamma c} > 0.
\end{align*}
The residualized estimator $\hat{c}_{R}$ ``partials out'' the component of the baseline estimator $\hat{c}$ that is linearly predictable from the diagnostic check $\hat{\gamma}$. 
In practice, we replace the unknown asymptotic covariance matrix $\Sigma$ with a consistent estimator $\hat\Sigma$ (i.e., $n\hat{\Sigma} \xrightarrow{p} \Sigma$), where $\hat{\sigma}_{c}^{2} > 0$ and $\hat{\Sigma}_{\gamma\gamma}$ is invertible. This leads to feasible plug-in versions of the above quantities. That is, once a researcher can obtain a consistent estimator of the joint asymptotic covariance of $(\hat{c},\hat{\gamma})$---the same ingredient already used to compute standard errors---it is straightforward to implement our procedures. In particular, we can define the plug-in regression coefficient $\hat{\Lambda} = \hat{\Sigma}_{c\gamma}\hat{\Sigma}_{\gamma\gamma}^{-1}$, plug-in residualized estimator $\hat{c}_{R} = \hat{c} - \hat{\Lambda}\hat{\gamma}$, and the plug-in variance estimator $\hat{\sigma}_{R}^{2} = \hat{\sigma}_{c}^{2} - \hat{\Sigma}_{c\gamma}\hat{\Sigma}_{\gamma\gamma}^{-1}\hat{\Sigma}_{\gamma c}$ so that $\hat{\sigma}_{R}$ is the standard error of $\hat{c}_{R}$.

This plug-in approach does not affect the asymptotic distribution of the residualized estimator. To see why, expand $\hat c_R=\hat c-\Lambda\hat\gamma-(\hat\Lambda-\Lambda)\hat\gamma$. The remainder $(\hat\Lambda-\Lambda)\hat\gamma$ is $o_p(n^{-1/2})$ because $\hat\Lambda-\Lambda=o_{p}(1)$ and $\hat\gamma=\gamma+O_p(n^{-1/2})=O_p(n^{-1/2})$ under the base model (where $\gamma=0$). As a result, the plug-in and population-coefficient versions share the same asymptotic distribution to first order. Thus, for ease of exposition, our theoretical results below are stated using the population $\Sigma$ while implementation examples use the sample $\hat{\Sigma}$.

\examplecontinues{ex:rct_balance}
In the RCT example, with $\hat c=\bar Y_1-\bar Y_0$ and $\hat\gamma=\bar X_1-\bar X_0$, residualization gives
\begin{equation}\label{eq:res_est}
    \hat c_R=(\bar Y_1-\bar Y_0)-\hat\Lambda(\bar X_1-\bar X_0),
\end{equation}
where $\hat\Lambda$ is defined as above. In the special case of a linear conditional mean function $E[Y\mid T,X]=\alpha+\tau T+\beta'X$, the population
coefficient satisfies $\Lambda=\beta'$. To see this, write
$Y = \alpha + \tau T + \beta'X + \varepsilon$ with
$E[\varepsilon\mid T, X]=0$. Under independence of $T$ and $X$, we have
\[
\Sigma_{c\gamma}
= \mathrm{Cov}(\bar Y_1-\bar Y_0,\;\bar X_1-\bar X_0)
= \beta'\,\mathrm{Var}(\bar X_1-\bar X_0)
= \beta'\,\Sigma_{\gamma\gamma},
\]
so $\Lambda=\Sigma_{c\gamma}\Sigma_{\gamma\gamma}^{-1}=\beta'$, and the
residualization subtracts $\hat\beta'(\bar X_1-\bar X_0)$ from the
difference-in-means, where $\hat \beta$ is a consistent sample analogue of $\beta$.

\section{Selective Reporting and Pre-Test Independence}
\label{sec:result1_selection}

Empirical practice often conditions what gets reported on whether diagnostic checks ``pass.'' This can take the form of reporting an estimate only when $\hat\gamma$ is close to zero, or modifying the analysis when $|\hat\gamma|$ is large---for instance, switching estimators (changing $\hat c$), or changing the target altogether (changing $c$), e.g., by restricting the sample or redefining the estimand.

To see why this is problematic, let $\tilde c$ denote the reported estimate after the researcher conditions on whether the check ``passes''---for example, $\tilde c = \hat c$ when $\hat\gamma\in\mathcal A$ for some acceptance set $\mathcal A$, and $\tilde c$ is altered or suppressed otherwise. Even if the base model holds, $\tilde c$ is drawn from the \emph{conditional} distribution of $\hat c$ given $\hat\gamma\in\mathcal A$, not its unconditional sampling distribution. 

When $\hat c$ and $\hat\gamma$ co-move, conditioning shifts the distribution of $\tilde c$ relative to its unconditional distribution. Under the null of correct specification, \citet{dechaisemartin2024pretests} show that for symmetric pre-tests (e.g., two-sided $t$-tests of $\gamma=0$), this distortion takes the form of conservatism---confidence intervals over-cover and tests are undersized. More generally, when the base model does not hold, conditioning can distort inference in either direction, so that confidence intervals and $t$-tests based on the unconditional distribution of $\hat c$ need not deliver valid coverage or size after selection \citep{fithian2014optimal, andrews2024inference, sarfati2025post}.

Critically, selection distorts first-order inference whenever the reported estimate co-moves with the check (i.e., $\hat{c}$ has nonzero covariance with $\hat\gamma$). Residualization breaks this link. By construction, the residualized estimator is first-order orthogonal to the check, and since the joint limit is normal, orthogonality implies asymptotic independence. Conditioning on events determined by $\hat\gamma$ therefore has no effect on the first-order distribution of $\hat c_R$. Formally:

\begin{proposition}\label{prop:selection}
Assume \eqref{eq:AL_model} and define $\hat c_R=\hat c-\Lambda\hat\gamma$ with
$\Lambda=\Sigma_{c\gamma}\Sigma_{\gamma\gamma}^{-1}$. Then
\[
\sqrt n
\begin{pmatrix}
\hat c_R-c\\
\hat\gamma-\gamma
\end{pmatrix}
\xrightarrow{d}
N\!\left(
0,\;
\begin{bmatrix}
\sigma_R^2 & 0\\
0 & \Sigma_{\gamma\gamma}
\end{bmatrix}
\right),
\qquad
\sigma_R^2=\sigma_c^2-\Sigma_{c\gamma}\Sigma_{\gamma\gamma}^{-1}\Sigma_{\gamma c},
\]
and, in particular, the Gaussian limit implies $\sqrt n(\hat c_R-c)$ is asymptotically independent of
$\sqrt n(\hat\gamma-\gamma)$.
\end{proposition}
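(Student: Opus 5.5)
The plan is to write $(\hat c_R - c, \hat\gamma-\gamma)$ as a fixed linear map applied to $(\hat c - c, \hat\gamma - \gamma)$ and then invoke the continuous mapping theorem. One point needs care first. The statement centers $\hat c_R$ at $c$, whereas $\hat c_R - c = (\hat c - c) - \Lambda(\hat\gamma - \gamma) - \Lambda\gamma$. Under the base model, $P_0\in\mathcal P$, so $\gamma=\gamma(P_0)=0$ and the last term vanishes. This is the only place where correct specification is used, and I would state it explicitly at the outset.

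With $\gamma=0$, define the deterministic matrix
\[
A=\begin{bmatrix} 1 & -\Lambda\\ 0 & I_{p_\gamma}\end{bmatrix}.
\]
Then
\[
\sqrt n\begin{pmatrix}\hat c_R-c\\ \hat\gamma-\gamma\end{pmatrix} = A\,\sqrt n\begin{pmatrix}\hat c-c\\ \hat\gamma-\gamma\end{pmatrix} = \frac{1}{\sqrt n}\sum_{i=1}^n \begin{pmatrix}\phi_c(D_i)-\Lambda\phi_\gamma(D_i)\\ \phi_\gamma(D_i)\end{pmatrix}+o_p(1),
\]
using the asymptotic linear representation \eqref{eq:AL_model}. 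Hence the pair is again asymptotically linear, with influence function $(\phi_c-\Lambda\phi_\gamma,\phi_\gamma)$. By the multivariate CLT (or the stated joint normality together with the continuous mapping theorem), the limit is $N(0,A\Sigma A')$.

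It remains to compute $A\Sigma A'$ block by block:
\begin{itemize}
\item The off-diagonal block is $\Sigma_{c\gamma}-\Lambda\Sigma_{\gamma\gamma}=\Sigma_{c\gamma}-\Sigma_{c\gamma}\Sigma_{\gamma\gamma}^{-1}\Sigma_{\gamma\gamma}=0$.
\item The upper-left entry is $\sigma_c^2-2\Lambda\Sigma_{\gamma c}+\Lambda\Sigma_{\gamma\gamma}\Lambda'=\sigma_c^2-\Sigma_{c\gamma}\Sigma_{\gamma\gamma}^{-1}\Sigma_{\gamma c}=\sigma_R^2$.
\item The lower-right block is unchanged, equal to $\Sigma_{\gamma\gamma}$.
\end{itemize}
Invertibility of $\Sigma_{\gamma\gamma}$ ensures $\Lambda$ is well defined, and the maintained inequality gives $\sigma_R^2>0$, so the limit is nondegenerate in its first coordinate.

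Finally, a jointly Gaussian vector with zero cross-covariance has independent components. So the limiting law factorizes as $N(0,\sigma_R^2)\otimes N(0,\Sigma_{\gamma\gamma})$, which is the asserted asymptotic independence. There is no real obstacle here; the only subtle step is the recentering issue ($\gamma=0$) noted at the start.
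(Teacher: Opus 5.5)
Your proposal is correct and follows essentially the same route as the paper: both obtain the joint influence function $(\phi_c-\Lambda\phi_\gamma,\phi_\gamma)$, show that the cross-covariance $\Sigma_{c\gamma}-\Lambda\Sigma_{\gamma\gamma}$ vanishes, and conclude independence from the Gaussian limit. You also make explicit two steps the paper's proof leaves implicit: that the recentering needs $\gamma=\gamma(P_0)=0$, which holds because $P_0\in\mathcal P$, and that the upper-left entry of the limiting covariance equals $\sigma_R^2$.
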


The result has a simple interpretation via the Gaussian limit. If $\big(\sqrt n(\hat c-c),\sqrt n(\hat\gamma-\gamma)\big)$ is jointly normal, the conditional mean of $\hat c$ given $\hat\gamma$ is linear:
\[
E\!\left[\sqrt n(\hat c-c)\mid \sqrt n(\hat\gamma-\gamma)\right]
=\Sigma_{c\gamma}\Sigma_{\gamma\gamma}^{-1}\sqrt n(\hat\gamma-\gamma).
\]
The coefficient $\Lambda=\Sigma_{c\gamma}\Sigma_{\gamma\gamma}^{-1}$ is precisely the component of $\hat c$ that is linearly predictable from $\hat\gamma$. Residualization removes this term, so $E[\sqrt n(\hat c_R-c)\mid \sqrt n(\hat\gamma-\gamma)]=0$ and the conditional distribution of $\hat c_R$ coincides with its unconditional distribution. This asymptotic independence implies post-selection validity for a broad class of conventional diagnostic screening rules, as formalized in the following corollary.

\begin{corollary}[Pre-test independence]\label{cor:pretest_independence}
Under the conditions of Proposition~\ref{prop:selection}, let
\[
T_n=\hat\Sigma_{\gamma\gamma}^{-1/2}\hat\gamma,
\]
and suppose the reporting rule $A_n$ is asymptotically equivalent to a fixed threshold rule based on $T_n$, in the sense that there exist a continuous function $q:\mathbb R^{p_\gamma}\to\mathbb R$ and a constant $a\in\mathbb R$ such that
\[
P\!\left(A_n \,\Delta\, \{q(T_n)\le a\}\right)\to 0.
\]
If, for $Z\sim N(0,I_{p_\gamma})$ we have $0<P(q(Z)\le a)<1$ and $P(q(Z)=a)=0,$ then
\[
\sqrt n(\hat c_R-c)\mid A_n \xrightarrow{d} N(0,\sigma_R^2).
\]
If moreover $\sqrt{n}\hat\sigma_R\xrightarrow{p}\sigma_R$, then the usual one-dimensional Gaussian confidence intervals and tests based on $\hat c_R$ remain asymptotically valid conditional on $A_n$.
\end{corollary}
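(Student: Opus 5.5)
The plan is to derive the conditional limit from the joint limit in Proposition~\ref{prop:selection}, via the continuous mapping theorem and the elementary identity $P(B\mid A)=P(B\cap A)/P(A)$. Throughout we work under the base model, so $\gamma=0$; this is the setting in which $T_n$ has the standard-normal limit needed for the hypothesis on $Z$. Write $U_n=\sqrt n(\hat c_R-c)$ and $V_n=\sqrt n\hat\gamma$. Proposition~\ref{prop:selection} gives $(U_n,V_n)\xrightarrow{d}(U,V)$, where $U\sim N(0,\sigma_R^2)$ and $V\sim N(0,\Sigma_{\gamma\gamma})$ are independent.

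The first step rewrites the test statistic. Since $n\hat\Sigma\xrightarrow{p}\Sigma$ and $\Sigma_{\gamma\gamma}\succ0$, the matrix square root is continuous at $\Sigma_{\gamma\gamma}$, so
\[
T_n=(n\hat\Sigma_{\gamma\gamma})^{-1/2}V_n .
\]
By Slutsky's lemma,
\[
(U_n,T_n)\xrightarrow{d}(U,Z),\qquad Z=\Sigma_{\gamma\gamma}^{-1/2}V\sim N(0,I_{p_\gamma}),
\]
with $Z$ still independent of $U$.

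The second step computes the conditional distribution. Fix $u$ and let $\Phi$ denote the standard normal cdf. Because $P(A_n\,\Delta\,\{q(T_n)\le a\})\to0$, we have
\[
P(U_n\le u,\,A_n)=P(U_n\le u,\;q(T_n)\le a)+o(1),
\]
and likewise $P(A_n)=P(q(T_n)\le a)+o(1)$. The set $\{(x,z):x\le u,\ q(z)\le a\}$ has boundary contained in $\{x=u\}\cup\{q(z)=a\}$. The limit $(U,Z)$ puts zero mass on the first piece because $U$ is continuous with $\sigma_R^2>0$, and zero mass on the second by the assumption $P(q(Z)=a)=0$. The continuity of $q$ matters here, since it ensures that the closure of $\{q\le a\}$ lies in $\{q\le a\}$ and its boundary in $\{q=a\}$. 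The portmanteau theorem therefore gives
\[
P(U_n\le u,\;q(T_n)\le a)\to P(U\le u)\,P(q(Z)\le a)=\Phi(u/\sigma_R)\,P(q(Z)\le a),
\]
using independence for the product form. The same argument gives $P(q(T_n)\le a)\to P(q(Z)\le a)$. Since $P(q(Z)\le a)>0$, dividing yields
\[
P(U_n\le u\mid A_n)\to\Phi(u/\sigma_R)\quad\text{for every }u,
\]
which is the claimed conditional convergence. The upper bound $P(q(Z)\le a)<1$ is not needed for this part; it only makes the selection nontrivial.

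The third step covers inference. Under the conditional law, $\sqrt n\hat\sigma_R\xrightarrow{p}\sigma_R$ still holds, because for any $\epsilon>0$,
\[
P\bigl(|\sqrt n\hat\sigma_R-\sigma_R|>\epsilon\mid A_n\bigr)\le \frac{P\bigl(|\sqrt n\hat\sigma_R-\sigma_R|>\epsilon\bigr)}{P(A_n)}\to0,
\]
since $P(A_n)$ is bounded away from zero. Conditional Slutsky then gives $(\hat c_R-c)/\hat\sigma_R\mid A_n\xrightarrow{d}N(0,1)$, so Gaussian intervals and tests have asymptotically correct conditional coverage and size.

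The main obstacle is the boundary argument in the second step. Convergence of the conditional probabilities requires both that the limiting event $\{q(Z)\le a\}$ is a continuity set of $(U,Z)$ and that $P(A_n)$ stays bounded away from zero, which is exactly what the conditions $P(q(Z)=a)=0$ and $P(q(Z)\le a)>0$ supply.
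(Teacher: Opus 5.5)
Your proof is correct and follows essentially the same route as the paper's. Both arguments obtain the joint limit of $(\sqrt n(\hat c_R-c),T_n)$ as an independent pair $(X,Z)$ via Proposition~\ref{prop:selection} and Slutsky, use $P(q(Z)=a)=0$ to pass to the limit on the selection event, divide by $P(\{q(T_n)\le a\})\to P(q(Z)\le a)>0$, and use a symmetric-difference bound to replace $\{q(T_n)\le a\}$ by $A_n$. The differences are cosmetic: you work with distribution functions and continuity sets where the paper uses bounded continuous test functions $g$. You also explicitly check that $\sqrt n\hat\sigma_R\xrightarrow{p}\sigma_R$ survives conditioning on $A_n$, which the paper's final Slutsky step leaves implicit.
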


Corollary~\ref{cor:pretest_independence} shows that the residualized estimator is robust to a broad class of conventional diagnostic screening rules. The key point is that the first-order distribution of $\hat c_R$ is asymptotically independent of the first-order distribution of the diagnostic statistic, so once the reporting decision is driven by a fixed function of the standardized check, conditioning on that decision does not alter the estimator's first-order distribution. In this sense, the correction is built into the estimator itself, rather than applied ex post through a rule-specific conditional-inference adjustment.

The assumption that $A_n$ be asymptotically equivalent to a fixed threshold rule is meant to capture the kinds of screening decisions researchers actually make in practice. It allows the realized reporting decision to differ from the idealized threshold rule on a vanishing set of samples, while requiring that, to first order, the decision be determined by a fixed function of the standardized diagnostic statistic. This covers many rules used in practice, including two-sided $t$-tests, joint Wald or $F$-tests, max-statistic cutoffs, and equivalent $p$-value thresholds. For example, a two-sided screening rule based on the $j$th balance statistic corresponds to $q(t)=|t_j|$, while a joint screening rule based on a Wald statistic corresponds to $q(t)=t't$.

The remaining conditions are mild regularity requirements. The condition $0<P(q(Z)\le a)<1$ rules out degenerate cases in which the screening rule passes with probability tending to zero or one, so that conditioning either becomes ill-posed or asymptotically vacuous. The condition $P(q(Z)=a)=0$ rules out knife-edge cases in which the Gaussian limit places mass exactly on the cutoff, ensuring the pass/fail indicator behaves continuously in the limit. For the threshold-based rules used in applications, such as $t$-tests and Wald tests, this condition is automatic.

The practical benefit of the corollary is that valid first-order inference for $\hat c_R$ does not depend on tailoring a correction to one particular diagnostic rule. If one researcher screens with a 5\% balance test, another with a 1\% test, and a referee prefers a joint Wald rule, the same residualized estimator continues to have the same first-order distribution after conditioning on any of these fixed screening rules. Thus the researcher need not commit ex ante to one specific diagnostic cutoff in order for standard inference on $\hat c_R$ to remain asymptotically valid. 

To illustrate, consider a concrete reporting rule in the RCT setting.

\examplecontinues{ex:rct_balance} In the balance-check example, a common reporting rule is to use the unadjusted difference-in-means (i.e., ``short'' regression) $\hat c_S = \bar Y_1 - \bar Y_0$ when a two-sided $t$-test of $\gamma=0$ finds $|\hat\gamma| \leq t$ ``small'' and switch to the covariate-adjusted ``long'' regression $\hat c_L=\hat c_S-\hat\beta'\hat\gamma$ when $|\hat\gamma| \geq t$ is ``large.'' Since both $\hat c_S$ and $\hat c_L$ generally co-move with $\hat\gamma$, their post-selection distributions differ from their unconditional sampling distributions. By Proposition~\ref{prop:selection}, residualization chooses $\Lambda$ so that $\hat c_R$ is first-order uncorrelated with $\hat\gamma$, eliminating the leading selection effect based on reporting threshold $t$. As \citet{banerjee2020theory} have discussed, however, there is no consensus on what \textit{degree} of imbalance warrants concern, meaning that different researchers may have different personal thresholds $t$ (e.g., corresponding to 5\% vs. 1\% levels of significance). Corollary~\ref{cor:pretest_independence} emphasizes that the solution $\hat c_R$ was not dependent on a particular choice of $t$. The math behind the selection distortion and correction for this particular example is worked out explicitly in Appendix~\ref{sec:app_b}.

\section{Efficiency and Orthogonality to Specification Checks}
\label{sec:result2_efficiency}

Our first result showed that residualization eliminates first-order selection distortions induced by reporting rules based on $\hat\gamma$. In this section we show that the same orthogonality to $\hat\gamma$ also arises from a different source: semiparametric efficiency.
The intuition is simple. A diagnostic check is designed to detect whether the realized data look unusual relative to the maintained statistical model. For example, in an RCT, covariate imbalance is informative about whether treatment assignment looks inconsistent with randomization. Such a check is therefore useful for diagnosing the model, but it is not itself part of the signal about the parameter of interest \textit{under} the model. If an estimator systematically co-moves with the realized value of the check, then some of its sampling variation is being driven by the same fluctuations that make the data look more or less consistent with the maintained restrictions. That component is not informative about $c$, and so an efficient estimator should not contain it.

We now make this precise. Fix a base distribution $P_0\in\mathcal P$. A \emph{smooth one-dimensional submodel through $P_0$} is a scalar-indexed family of distributions $\{P_t:t\in(-\varepsilon,\varepsilon)\}\subseteq\mathcal P$ with $P_{t=0}=P_0$ and score function
\[
s_{P_t}=\left.\frac{d}{dt}\log dP_t\right|_{t=0}\in L_0^2(P_0).
\]
The tangent space of $\mathcal P$ at $P_0$ is the closed linear span of all such scores:
\[
\mathcal T_{\mathcal P}=\overline{\mathrm{span}}\{s_{P_t}\}\subseteq L_0^2(P_0),
\]
where the span ranges over all smooth one-dimensional submodels through $P_0$. Its orthocomplement is
\[
\mathcal T_{\mathcal P}^{\perp}
=\left\{
h\in L_0^2(P_0):E_0[h\,s]=0 \text{ for all } s\in\mathcal T_{\mathcal P}
\right\}.
\]
Intuitively, $\mathcal T_{\mathcal P}$ collects local directions that keep us inside the model, while $\mathcal T_{\mathcal P}^{\perp}$ collects directions orthogonal to all such within-model perturbations and therefore point away from the model.

In the RCT example, a direction in $\mathcal T_{\mathcal P}$ perturbs the joint distribution of the data while preserving random assignment to first order. By contrast, a direction in $\mathcal T_{\mathcal P}^{\perp}$ represents a local departure from randomization.

An estimator is \emph{regular} if its first-order behavior is stable under such local perturbations: along every smooth one-dimensional submodel through $P_0$, the limit distribution of $\sqrt n(\hat c-c(P_t))$ under $P_t$ at $t=1/\sqrt n$ does not depend on the particular direction of the submodel. This rules out pathological superefficient estimators that may look unusually precise at isolated points but behave poorly under arbitrarily small perturbations.

Now let $c=c(P)$ be the target parameter. For a within-model direction $s\in\mathcal T_{\mathcal P}$, write
\[
\dot c(s)=\left.\frac{d}{dt}c(P_t)\right|_{t=0}
\]
for the pathwise derivative of the parameter in direction $s$. If $\hat c$ is a regular estimator with influence function $\phi_c$, then
\[
E_0[\phi_c\,s]=\dot c(s)
\qquad\text{for all } s\in\mathcal T_{\mathcal P}.
\]
This says that when we perturb the distribution slightly in a direction that remains inside the model, the estimator must respond to first order in exactly the same way as the parameter itself.

Semiparametric efficiency adds one more requirement: among all regular estimators that track the parameter correctly in this sense, we want the one with the smallest asymptotic variance. Geometrically, this means choosing the shortest such influence function in $L^2_0(P_0)$. The efficient influence function therefore lies in the tangent space itself: any extra component in $\mathcal T_{\mathcal P}^{\perp}$ would increase variance without helping the estimator track $c$ along within-model directions.

Now consider a diagnostic check. Because $\gamma(P)=0$ for all $P\in\mathcal P$, the check is flat along every within-model perturbation. Formally, for every $s\in\mathcal T_{\mathcal P}$,
\[
\dot\gamma(s) =\left.\frac{d}{dt}\gamma(P_t)\right|_{t=0} =E_0[\phi_\gamma\,s] =0.
\]
Thus $\phi_\gamma\in\mathcal T_{\mathcal P}^{\perp}$: the diagnostic check does not respond to perturbations that stay within the model, and only picks up local directions that move the distribution away from the maintained restrictions. By contrast, the efficient influence function for $c$ lies in $\mathcal T_{\mathcal P}$. An efficient estimator therefore cannot co-move with the diagnostic check to first order.

In the RCT of Example~\ref{ex:rct_balance}, tangent directions in $\mathcal T_{\mathcal P}$ preserve random assignment to first order. Along all such directions balance remains zero, so the influence function of $\hat\gamma=\bar X_1-\bar X_0$ lies in $\mathcal T_{\mathcal P}^{\perp}$. The balance table therefore measures local departures from the randomization model, while an efficient ATE estimator uses only variation that is informative about the treatment effect within the randomization model.
With this setup in hand, we may state the following:

\begin{proposition}[Efficiency implies orthogonality to diagnostic checks]\label{prop:orthonecessary}
Assume joint asymptotic linearity as in \eqref{eq:AL_model}. Suppose $\hat c$ is semiparametrically efficient for $c(P_0)$ under the base model. If $\gamma(P)=0$ for all $P\in\mathcal P$, then
\[
\Sigma_{c\gamma}=E_0[\phi_c(D)\phi_\gamma(D)']=0.
\]
\end{proposition}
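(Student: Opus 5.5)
The plan is to show two facts about the influence functions and then combine them: $\phi_c$ lies in the tangent space $\mathcal T_{\mathcal P}$, and each coordinate of $\phi_\gamma$ lies in its orthocomplement $\mathcal T_{\mathcal P}^{\perp}$. Since $L_0^2(P_0)$ is a Hilbert space with inner product $E_0[fg]$, these two facts immediately give $\Sigma_{c\gamma}=E_0[\phi_c\phi_\gamma']=0$, one coordinate at a time.

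\textbf{Step 1: $\phi_c \in \mathcal T_{\mathcal P}$.} Because $\hat c$ is regular and asymptotically linear, its influence function satisfies $E_0[\phi_c s]=\dot c(s)$ for all $s\in\mathcal T_{\mathcal P}$. Decompose $\phi_c=\Pi\phi_c+(\phi_c-\Pi\phi_c)$, where $\Pi$ is orthogonal projection onto the closed subspace $\mathcal T_{\mathcal P}$. The projection $\Pi\phi_c$ satisfies the same pathwise-derivative identity, since the residual is orthogonal to every $s$. By Pythagoras, $E_0[\phi_c^2]=E_0[(\Pi\phi_c)^2]+E_0[(\phi_c-\Pi\phi_c)^2]$.

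Semiparametric efficiency means $\sigma_c^2$ attains the bound, which is the variance of the canonical gradient $\Pi\phi_c$. I would invoke the standard convolution and Hájek–Le Cam result here, e.g.\ \citet{van2000asymptotic}, rather than re-derive it. Under that result, efficiency forces $\|\phi_c-\Pi\phi_c\|=0$, so $\phi_c\in\mathcal T_{\mathcal P}$ almost surely.

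\textbf{Step 2: $\phi_\gamma \in (\mathcal T_{\mathcal P}^{\perp})^{p_\gamma}$.} I will assume that $\hat\gamma$ is also regular, or at least that $\gamma$ is pathwise differentiable with derivative represented by $\phi_\gamma$ along submodels in $\mathcal P$. Then for any smooth submodel $\{P_t\}\subseteq\mathcal P$ with score $s$,
\[
E_0[\phi_\gamma s]=\left.\frac{d}{dt}\gamma(P_t)\right|_{t=0}.
\]
The right-hand side is zero, because $\gamma(P_t)=0$ for every $t$. This is the display preceding the proposition. By linearity the identity extends to the span of scores, and by $L^2$-continuity of $s\mapsto E_0[\phi_\gamma s]$ it extends to the closure. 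Hence each coordinate of $\phi_\gamma$ lies in $\mathcal T_{\mathcal P}^{\perp}$.

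\textbf{Main obstacle.} The delicate step is justifying the pathwise-derivative identity for $\phi_\gamma$, namely that the influence function of $\hat\gamma$ actually represents the derivative of $\gamma(P_t)$. This requires regularity of $\hat\gamma$ along within-model submodels, which I would state as part of the maintained conditions.

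I would argue it via Le Cam's third lemma. Under $P_{1/\sqrt n}$, $\sqrt n(\hat\gamma-\gamma(P_0))$ converges in law to $N(E_0[\phi_\gamma s],\Sigma_{\gamma\gamma})$. Regularity requires that the centered quantity $\sqrt n(\hat\gamma-\gamma(P_{1/\sqrt n}))$ have a limit that does not depend on $s$. Since $\gamma(P_{1/\sqrt n})=0=\gamma(P_0)$, the centering is unchanged, so the mean shift $E_0[\phi_\gamma s]$ must equal its value at $s=0$, which is $0$.

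The same third-lemma argument applied to $\hat c$ delivers the identity $E_0[\phi_c s]=\dot c(s)$ used in Step 1. Combining Steps 1 and 2 then gives $\Sigma_{c\gamma}=0$.
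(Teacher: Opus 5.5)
Your proposal is correct and follows essentially the same route as the paper's proof: efficiency places $\phi_c$ in $\mathcal T_{\mathcal P}$, the identity $\gamma(P)=0$ on $\mathcal P$ places each coordinate of $\phi_\gamma$ in $\mathcal T_{\mathcal P}^{\perp}$, and orthogonality then gives $\Sigma_{c\gamma}=0$. Your projection/Pythagoras step and your third-lemma derivation of $E_0[\phi_\gamma s]=0$ spell out what the paper takes as given, including the regularity of $\hat\gamma$ that the paper uses without stating it when it writes $\dot\gamma(s)=E_0[\phi_\gamma s]$.
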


Proposition~\ref{prop:orthonecessary} says that an efficient estimator cannot co-move with a statistic that only responds to local departures from the maintained model. The natural follow-up is whether this condition is merely diagnostic or actionable. The next result shows it is the latter: when orthogonality fails, residualization provides an explicit repair.

\begin{corollary}[Efficiency repair]\label{cor:repair}
Under \eqref{eq:AL_model}, if $\hat c$ is regular for $c$, $\gamma(P)=0$ for all $P\in\mathcal P$, and $\Sigma_{c\gamma}\neq 0$, then the residualized estimator $\hat c_R=\hat c-\Lambda\hat\gamma$ is regular for $c$ with strictly smaller large-sample variance:
\[
\mathrm{Var}(\hat c_R)
=\sigma_c^2-\Sigma_{c\gamma}\Sigma_{\gamma\gamma}^{-1}\Sigma_{\gamma c}
<\sigma_c^2.
\]
\end{corollary}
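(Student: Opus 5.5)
The argument has three parts: show that $\hat c_R$ is asymptotically linear with an explicit influence function, show that this influence function still satisfies the regularity (pathwise-derivative) condition, and compute its variance.

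\textbf{Asymptotic linearity.} By \eqref{eq:AL_model} and linearity of $\hat c_R=\hat c-\Lambda\hat\gamma$, and since $\gamma=\gamma(P_0)=0$ because $P_0\in\mathcal P$,
\[
\sqrt n(\hat c_R-c)=\frac{1}{\sqrt n}\sum_{i=1}^n\bigl(\phi_c(D_i)-\Lambda\phi_\gamma(D_i)\bigr)+o_p(1).
\]
So $\hat c_R$ is asymptotically linear with influence function $\phi_R=\phi_c-\Lambda\phi_\gamma\in L_0^2(P_0)$.

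\textbf{Regularity.} I would use the standard characterization, already invoked before Proposition~\ref{prop:orthonecessary}: an asymptotically linear estimator is regular for $c$ if and only if its influence function satisfies $E_0[\phi\, s]=\dot c(s)$ for all $s\in\mathcal T_{\mathcal P}$ (see \citet{van2000asymptotic}). Three facts combine here.
\begin{itemize}
\item Since $\hat c$ is regular, $E_0[\phi_c s]=\dot c(s)$.
\item Since $\gamma(P_t)=0$ along every submodel in $\mathcal P$, $\dot\gamma(s)=0$, and this equals $E_0[\phi_\gamma s]$. Hence $\phi_\gamma\in\mathcal T_{\mathcal P}^\perp$, exactly as argued before Proposition~\ref{prop:orthonecessary}.
\item Consequently $E_0[\phi_R s]=\dot c(s)-\Lambda\cdot 0=\dot c(s)$.
\end{itemize}
The step I expect to need the most care is justifying $E_0[\phi_\gamma s]=\dot\gamma(s)$. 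This requires $\hat\gamma$ itself to be regular, or $\gamma(\cdot)$ to be pathwise differentiable with influence function $\phi_\gamma$. I would follow the paper's preceding discussion and take it as given. If needed, it can be made rigorous directly via Le Cam's third lemma: along $P_{1/\sqrt n}$, $\sqrt n\hat\gamma$ shifts by $E_0[\phi_\gamma s]$, while $\gamma(P_{1/\sqrt n})\equiv 0$. Regularity of $\hat c_R$ then follows along the same lines. Under $P_{t}$ with $t=1/\sqrt n$, the third lemma gives $\sqrt n(\hat c_R-c(P_0))\rightsquigarrow N(E_0[\phi_R s],\sigma_R^2)$, and $\sqrt n(c(P_t)-c(P_0))\to\dot c(s)$. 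The two shifts cancel, so the limit of $\sqrt n(\hat c_R-c(P_t))$ is $N(0,\sigma_R^2)$, independent of $s$.

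\textbf{Variance.} Expanding,
\[
E_0[\phi_R^2]=\sigma_c^2-2\Lambda\Sigma_{\gamma c}+\Lambda\Sigma_{\gamma\gamma}\Lambda' .
\]
Substituting $\Lambda=\Sigma_{c\gamma}\Sigma_{\gamma\gamma}^{-1}$ gives
\[
E_0[\phi_R^2]=\sigma_c^2-\Sigma_{c\gamma}\Sigma_{\gamma\gamma}^{-1}\Sigma_{\gamma c},
\]
which is the expression in Proposition~\ref{prop:selection}. Because $\Sigma_{\gamma\gamma}\succ 0$, its inverse is positive definite. Since $\Sigma_{c\gamma}\ne 0$, the quadratic form $\Sigma_{c\gamma}\Sigma_{\gamma\gamma}^{-1}\Sigma_{\gamma c}$ is strictly positive, which gives the strict inequality $\mathrm{Var}(\hat c_R)<\sigma_c^2$.
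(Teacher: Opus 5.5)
Your proposal is correct and follows the paper's proof step for step. The steps are: asymptotic linearity of $\hat c_R$ with influence function $\phi_c-\Lambda\phi_\gamma$ (using $\gamma(P_0)=0$); regularity via $E_0[(\phi_c-\Lambda\phi_\gamma)s]=\dot c(s)-\Lambda\cdot 0$ for $s\in\mathcal T_{\mathcal P}$; and the variance expansion, with strictness coming from $\Sigma_{\gamma\gamma}\succ 0$ and $\Sigma_{c\gamma}\neq 0$. Your caveat that $E_0[\phi_\gamma s]=\dot\gamma(s)$ implicitly requires $\hat\gamma$ to be regular (the Le Cam shift only vanishes once that is imposed) flags a step the paper simply asserts, and your third-lemma check of regularity for $\hat c_R$ is a correct elaboration rather than a different route.
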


The corollary follows directly from \eqref{eq:AL_model}. For any linear adjustment $\hat c(\lambda)=\hat c-\lambda\hat\gamma$, the influence function is $\phi_c-\lambda\phi_\gamma$, so the asymptotic variance is
\[
E_0[(\phi_c-\lambda\phi_\gamma)^2]
=\sigma_c^2-2\lambda\Sigma_{\gamma c}+\lambda\Sigma_{\gamma\gamma}\lambda',
\]
which is indeed minimized at $\lambda=\Lambda=\Sigma_{c\gamma}\Sigma_{\gamma\gamma}^{-1}$.
The intuition is straightforward: if $\Sigma_{c\gamma}\neq 0$, then part of the estimator's sampling variation is aligned with the same off-model direction captured by the diagnostic check. Residualization subtracts that component, and the improvement is strict whenever $\Sigma_{c\gamma}\neq 0$. In practice, this gives a simple diagnostic: if the estimated covariance $\hat\Sigma_{c\gamma}$ is nonnegligible, the original estimator is leaving precision on the table, which the residualized estimator recovers. To see how this plays out concretely, we can compare three estimators in the RCT setting.

\examplecontinues{ex:rct_balance}
It is useful to compare three estimators in the RCT setting. Let $\hat\gamma=\bar X_1-\bar X_0$ denote the balance statistic and let $\hat c_S=\bar Y_1-\bar Y_0$ denote the unadjusted difference-in-means. A conventional covariate-adjusted estimator is the coefficient on $T$ from the ``long'' regression of $Y$ on $T$ and $X$, which can be written as
\[
\hat c_L=\hat c_S-\hat\beta_L'\hat\gamma, 
\]
where $\hat\beta_L$ is the coefficient on $X$.
The residualized estimator instead uses the coefficient implied by the joint asymptotic covariance of $(\hat c_S,\hat\gamma)$:
\[
\hat c_R=\hat c_S-\hat\beta_R'\hat\gamma,
\qquad
\hat\beta_R=\hat\Sigma_{\gamma\gamma}^{-1}\hat\Sigma_{\gamma c_S}.
\]
Thus both $\hat c_L$ and $\hat c_R$ adjust the short estimator by subtracting a multiple of the realized imbalance $\hat\gamma$, but they use different coefficients. In the discussion below, let $\beta_L$ and $\beta_R$ denote the probability limits of $\hat\beta_L$ and $\hat\beta_R$.

\begin{corollary}[Variance comparisons]\label{cor:rct_variance}
$\hat c_R$ has weakly smaller large-sample variance than both $\hat c_S$ and $\hat c_L$, with equality relative to $\hat c_S$ iff $\Sigma_{c_S\gamma}=0$ and equality relative to $\hat c_L$ iff $\beta_L=\beta_R$.
\end{corollary}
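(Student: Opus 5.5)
The plan is to reduce both comparisons to the quadratic variance formula used in the discussion after Corollary~\ref{cor:repair}. Take $\hat c=\hat c_S$ as the baseline in \eqref{eq:AL_model}. Any estimator of the form $\hat c_S-\hat\lambda'\hat\gamma$ with $\hat\lambda\xrightarrow{p}\lambda$ has influence function $\phi_{c_S}-\lambda'\phi_\gamma$. This uses the same plug-in argument as in Section~\ref{sec:implementation}: under the base model $\hat\gamma=O_p(n^{-1/2})$, so $(\hat\lambda-\lambda)'\hat\gamma=o_p(n^{-1/2})$. Its asymptotic variance is therefore
\[
V(\lambda)=\sigma_{c_S}^2-2\lambda'\Sigma_{\gamma c_S}+\lambda'\Sigma_{\gamma\gamma}\lambda .
\]
All three estimators fit this template: $\hat c_S$ corresponds to $\lambda=0$, $\hat c_L$ to $\lambda=\beta_L$, and $\hat c_R$ to $\lambda=\beta_R=\Sigma_{\gamma\gamma}^{-1}\Sigma_{\gamma c_S}$. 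Here I would verify that $\hat\beta_L$ has a probability limit $\beta_L$, which holds by standard OLS consistency when $E[XX']$ has full rank. This places all three estimators in a single one-parameter family.

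The key step is to complete the square using $\Sigma_{\gamma\gamma}\succ0$:
\[
V(\lambda)=\sigma_R^2+(\lambda-\beta_R)'\Sigma_{\gamma\gamma}(\lambda-\beta_R),\qquad \sigma_R^2=\sigma_{c_S}^2-\Sigma_{c_S\gamma}\Sigma_{\gamma\gamma}^{-1}\Sigma_{\gamma c_S}.
\]
Since $\Sigma_{\gamma\gamma}$ is positive definite, $V(\lambda)\ge V(\beta_R)$, with equality if and only if $\lambda=\beta_R$.

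The two equality conditions then follow directly. For $\hat c_L$, set $\lambda=\beta_L$: the variances are equal if and only if $\beta_L=\beta_R$. For $\hat c_S$, set $\lambda=0$: the variances are equal if and only if $\beta_R=0$. Because $\Sigma_{\gamma\gamma}$ is invertible, $\beta_R=0$ holds if and only if $\Sigma_{\gamma c_S}=0$, that is, $\Sigma_{c_S\gamma}=0$. The $\hat c_S$ case is consistent with Corollary~\ref{cor:repair}.

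The only real obstacle is confirming that $\hat c_L$ genuinely belongs to the linear-adjustment class with a nonrandom limiting coefficient. This requires the algebraic identity $\hat c_L=\hat c_S-\hat\beta_L'\hat\gamma$, which is the Frisch--Waugh representation of the long regression with an intercept and a binary $T$, together with $\hat\beta_L\xrightarrow{p}\beta_L$. The identity is already asserted in the text, so I would take it as given. I would only remark that $\beta_L$ is generally the pooled within-arm slope, whereas $\beta_R$ is the variance-weighted combination of the arm-specific slopes, in line with \citet{lin2013agnostic}. These two coincide, for example, under equal assignment probabilities or homogeneous slopes. That remark is not needed for the corollary, since the statement is phrased directly in terms of $\beta_L=\beta_R$.
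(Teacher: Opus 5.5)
Your proposal is correct and follows essentially the same route as the paper's proof. You write the first-order variance of $\hat c_S-\lambda'\hat\gamma$ as a quadratic in $\lambda$, complete the square around $\beta_R=\Sigma_{\gamma\gamma}^{-1}\Sigma_{\gamma c_S}$, and read off the comparisons at $\lambda=\beta_L$ and $\lambda=0$; the paper handles the $\lambda=0$ case directly via $\sigma_S^2-\Sigma_{c_S\gamma}\Sigma_{\gamma\gamma}^{-1}\Sigma_{\gamma c_S}\le\sigma_S^2$, and where it simply fixes $\beta_L$ at its probability limit you spell out the plug-in step, which are only cosmetic differences.
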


The proof, given in Appendix~\ref{sec:app_a}, follows from the fact that $\beta_R$ minimizes the asymptotic variance of $\hat c_S-\beta'\hat\gamma$ over all $\beta$. Any other choice of coefficient, including $\beta_L$, incurs the additional nonnegative quadratic penalty
\[
(\beta_L-\beta_R)'\Sigma_{\gamma\gamma}(\beta_L-\beta_R).
\]

The two equality cases are informative. First, equality with $\hat c_S$ occurs when $\Sigma_{c_S\gamma}=0$, so that the unadjusted difference-in-means does not co-move with the balance statistic to first order. In that case, the realized imbalance in $X$ is asymptotically uninformative about the sampling error in $\hat c_S$; there is nothing to residualize, and $\hat c_R$ reduces to $\hat c_S$ asymptotically.

Second, equality with $\hat c_L$ occurs when $\beta_L=\beta_R$. A sufficient condition is
\[
E[Y\mid T,X]=\alpha+\tau T+\beta'X,
\]
so that the conditional mean is linear in $X$ and the slope on $X$ is the same in both treatment arms (equivalently, there are no $T\times X$ interactions). In that case, the coefficient on $X$ from the long regression targets the same population coefficient that governs how realized imbalance in $X$ translates into sampling error in $\hat c_S$, and so $\beta_L=\beta_R$. More generally, when the conditional mean is nonlinear in $X$ or the slope on $X$ differs across treatment arms, the two coefficients need not coincide. Thus, while the conventional long regression adjusts for covariates, it does not in general choose the adjustment that is optimal for large-sample precision. Residualization closes this gap by choosing the coefficient that minimizes asymptotic variance, rather than the one implied by a particular regression specification. 

\section{Minimax Bias Under Local Misspecification}
\label{sec:result3_minimax}

In Sections~\ref{sec:result1_selection} and~\ref{sec:result2_efficiency}, we fixed a base distribution $P_0\in\mathcal{P}$ and studied estimators targeting the identified value $c(P_0)$. In that framing, the target was viewed as the value of the functional $c(\cdot)$ at the benchmark distribution $P_0$, and we compared alternative estimators that continue to target this same quantity under correct specification.

For the local misspecification analysis in this section, it is convenient to recast the same benchmark in fixed-target form. That is, fix a benchmark pair $(c_0,P_0)\in\mathcal{M}$. Since $P_0\in\mathcal{P}$ and the parameter is identified on $\mathcal{P}$, this benchmark value satisfies
\[
c_0=c(P_0),
\]
so under correct specification this is only a change in framing, not a change in estimand. The reason for the new framing is that once we allow misspecification, the true DGP need not lie in $\mathcal{P}$, and the model-based functional $c(P)$ need not be defined at the true distribution. We therefore keep the target fixed at the benchmark value $c_0$ and study how estimators for $c_0$ behave when the sampling distribution departs locally from $P_0$.

Specifically, the researcher has a baseline estimator $\hat c$ that is consistent for $c_0$ under $P_0$, together with a diagnostic statistic $\hat\gamma$ whose population value is zero at $P_0$. The true DGP, however, may be a sequence $\{P_n\}$ that differs from $P_0$ by an amount of order $1/\sqrt n$. Under such local misspecification, $\sqrt n(\hat c-c_0)$ need no longer be centered at zero: its limiting distribution may acquire a nonzero mean, so $\hat c$ exhibits first-order asymptotic bias for the fixed target $c_0$. At the same time, the diagnostic check need no longer have population value zero; local misspecification may shift $\gamma(P_n)$ as well.

This fixed-target interpretation is the relevant one in our applications. In the RCT example, for instance, the parameter of interest remains the benchmark ATE $c_0$; a slight failure of random assignment perturbs the sampling distribution of the observed data and can bias the difference-in-means estimator for that target, but it does not itself redefine the estimand of interest. More generally, throughout this section, misspecification refers to a local failure of the baseline model for the DGP, not a change in the target parameter itself. 

We model these departures using the standard local misspecification device from semiparametric theory, and allow the true DGP $P_n$ to deviate from the benchmark distribution $P_0$ by an amount of order $1/\sqrt n$. The misspecification is ``local'' in the sense that $P_n$ converges to $P_0$ at the $1/\sqrt n$ rate, meaning the departure is too small to be separated from $P_0$ without $\sqrt n$-scale magnification, yet large enough to affect first-order asymptotic distributions and hence inference.

A natural concern in this setting is that the estimator $\hat c_R=\hat c-\Lambda\hat\gamma$ may reduce variance under correct specification at the expense of making the estimator more sensitive to misspecification. We show that the opposite is true.

Within the class of linear adjustments $\{\hat c-\lambda\hat\gamma\}$, the residualized estimator $\hat c_R=\hat c-\Lambda\hat\gamma$ minimizes worst-case first-order bias over an unrestricted local misspecification neighborhood, for the same coefficient $\Lambda=\Sigma_{c\gamma}\Sigma_{\gamma\gamma}^{-1}$ from Section~\ref{sec:result2_efficiency} that minimizes asymptotic variance under correct specification. Combining these two results, this implies that $\hat c_R$ is asymptotically minimax under squared loss, within the class of linear adjustments, for any fixed bound on the degree of local misspecification. The remainder of this section is organized as follows: we develop the local misspecification framework, derive the bias formula from primitives, and present the main results in the subsections that follow.

To formalize ``small'' departures from the base model, we adopt the standard local misspecification device from semiparametric theory \citep{bickel1993efficient, van2000asymptotic, newey1985generalized, andrews2017measuring, andrews2020informativeness}. The true data-generating process $P_n$ is allowed to deviate from $P_0$ by an amount of order $1/\sqrt{n}$. To first order, the local alternative may be written as\footnote{Formally, \eqref{eq:local_perturb} is justified by differentiability in quadratic mean of a local path $\{P_t^s\}$ through $P_0$ with score $s$: 
\[
\int\!\big(\sqrt{dP_t^s}-\sqrt{dP_0}-\tfrac{t}{2}s\sqrt{dP_0}\big)^2=o(t^2),
\qquad t\to 0.
\]
setting $P_n=P_{1/\sqrt n}^s$. See \citet{van2000asymptotic} for a formal treatment.}
\begin{equation}\label{eq:local_perturb}
dP_n(d) = \left(1 + \frac{s(d)}{\sqrt{n}}\right) dP_0(d),
\end{equation}
where the score function $s \in L_0^2(P_0)$ satisfies $E_0[s(D)]=0$ (so that $P_n$ integrates to one to first order) and $E_0[s(D)^2]\le\mu^2$ for given $\mu<\infty$. The score function $s$ describes the direction and magnitude in which the distribution $P_n$ departs from $P_0$ (i.e., regions of the support where $s(d)>0$ receive more mass under $P_n$, while regions where $s(d)<0$ receive less). The constraint $E_0[s^2]\le\mu^2$ bounds the size of the perturbation. We consider the local misspecification ball
\begin{equation}
\mathcal S(\mu)=\Big\{s\in L_0^2(P_0):\ E_0[s(D)^2]\le \mu^2\Big\}.
\label{eq:unrestricted_ball}
\end{equation}
This neighborhood is \emph{unrestricted} in the sense that $s$ is not constrained to satisfy $E_0[\phi_\gamma s]=0$ or any other moment condition, so the misspecification may shift the check $\gamma$ itself, contrasting with the \emph{restricted} neighborhood of \citet{andrews2020informativeness}, which imposes $E_0[\phi_\gamma s]=0$ to ensure that admissible misspecifications do not shift the population value of the check. Note that ``unrestricted'' is sometimes used in the literature to mean that misspecification may also perturb the parameter of interest in the population (i.e., viewing $c_0$ as a statistical functional $c(\cdot)$ that changes with $P$), meaning $s$ includes both structural and misspecification directions. Recall that in our fixed-target framing, the adversary is free to shift the check, but the target $c_0$ remains fixed, meaning $s$ represents misspecification of the base model $P_0$ only.

We now derive the first-order asymptotic bias of the linear adjustment $\hat c(\lambda) = \hat c - \lambda\hat\gamma$, with influence function $\psi_\lambda = \phi_c - \lambda\phi_\gamma$. Under the base model $P_0$, asymptotic linearity \eqref{eq:AL_model} gives
\[
\sqrt{n}\big(\hat c(\lambda) - c_0\big) = \frac{1}{\sqrt{n}}\sum_{i=1}^n \psi_\lambda(D_i) + o_p(1).
\]
Under $P_0$ the summands have mean zero, so $\hat c(\lambda)$ is asymptotically unbiased. Under the perturbed distribution $P_n$ defined in \eqref{eq:local_perturb}, the mean of each summand shifts by $\frac{1}{\sqrt{n}}E_0[\psi_\lambda \cdot s]$. By Le Cam's third lemma,\footnote{See \citet[Examples 6.5 and 6.7]{van2000asymptotic}.} this yields
\begin{equation}\label{eq:limiting_dist_misspec}
\sqrt{n}\big(\hat c(\lambda) - c_0\big) \xrightarrow{d} N\!\big(E_0[\psi_\lambda \cdot s],\; E_0[\psi_\lambda^2]\big).
\end{equation}
The first-order asymptotic bias of $\hat c(\lambda)$ under misspecification score $s$ is thus
\begin{equation}\label{eq:bias_formula_detail}
\text{bias}(\lambda, s) = E_0[\psi_\lambda(D)\cdot s(D)] = E_0[\phi_c\cdot s] - \lambda\,E_0[\phi_\gamma \cdot s],
\end{equation}
i.e., the $L^2_0(P_0)$ inner product between the estimator's influence function and the misspecification score. The misspecification score $s$ in \eqref{eq:bias_formula_detail} perturbs the distribution of the data while holding the parameter of interest $c$ fixed. In the RCT of Example~\ref{ex:rct_balance}, for instance, the ATE $c$ is a feature of potential outcomes that exists regardless of whether randomization holds; misspecification of the assignment mechanism (i.e., $P_n \neq P_0$) biases the estimator $\hat c = \bar Y_1 - \bar Y_0$ for this fixed target but does not redefine the ATE. 

\begin{proposition}[Unrestricted worst-case bias and minimax residualization]
\label{prop:minimax}
Under \eqref{eq:AL_model}, the unique minimizer of the worst-case bias over $\lambda$ is
\[
\lambda^\star=\Lambda=\Sigma_{c\gamma}\Sigma_{\gamma\gamma}^{-1},
\qquad
\hat c_R=\hat c(\Lambda)=\hat c-\Lambda\hat\gamma,
\]
and the corresponding minimax worst-case bias is
\begin{equation}
\inf_{\lambda}\ \sup_{s\in\mathcal S(\mu)}\big|E_0[\psi_\lambda s]\big|
=
\mu\sqrt{\sigma_c^2-\Sigma_{c\gamma}\Sigma_{\gamma\gamma}^{-1}\Sigma_{\gamma c}}
=
\mu\,\sigma_c\,\sqrt{1-\mathcal I},
\label{eq:unrestricted_minimax_value}
\end{equation}
where $\mathcal I=\Sigma_{c\gamma}\Sigma_{\gamma\gamma}^{-1}\Sigma_{\gamma c}/\sigma_c^2\in[0,1)$.
\end{proposition}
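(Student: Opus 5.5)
The plan is to compute the inner supremum in closed form for each fixed $\lambda$, which turns the minimax problem into the variance-minimization problem already solved in Section~\ref{sec:result2_efficiency}.

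\textbf{Step 1: closed form for the inner supremum.} Fix $\lambda\in\mathbb R^{1\times p_\gamma}$ and recall from \eqref{eq:bias_formula_detail} that $\text{bias}(\lambda,s)=E_0[\psi_\lambda s]$ with $\psi_\lambda=\phi_c-\lambda\phi_\gamma\in L_0^2(P_0)$. By Cauchy--Schwarz in $L^2(P_0)$, every $s\in\mathcal S(\mu)$ satisfies $|E_0[\psi_\lambda s]|\le \mu\,(E_0[\psi_\lambda^2])^{1/2}$. The bound is attained. Since $\sigma_c^2>\Sigma_{c\gamma}\Sigma_{\gamma\gamma}^{-1}\Sigma_{\gamma c}$, we have $E_0[\psi_\lambda^2]>0$, so we may take $s^\star=\mu\,\psi_\lambda/\|\psi_\lambda\|_{L^2}$. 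This $s^\star$ lies in $L_0^2(P_0)$ because $\psi_\lambda$ has mean zero, and it has norm $\mu$. Hence
\[
\sup_{s\in\mathcal S(\mu)}|E_0[\psi_\lambda s]|=\mu\,\big(E_0[\psi_\lambda^2]\big)^{1/2}.
\]
The key point is that the ball is unrestricted: no orthogonality constraint forces $s^\star$ out of the feasible set, so the worst-case bias equals $\mu$ times the asymptotic standard deviation.

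\textbf{Step 2: reduction to variance minimization.} Because $x\mapsto\mu\sqrt x$ is strictly increasing, minimizing the worst-case bias is equivalent to minimizing
\[
V(\lambda)=E_0[\psi_\lambda^2]=\sigma_c^2-2\lambda\Sigma_{\gamma c}+\lambda\Sigma_{\gamma\gamma}\lambda'.
\]
This is the same function treated after Corollary~\ref{cor:repair}. I will complete the square:
\[
V(\lambda)=\sigma_R^2+(\lambda-\Lambda)\Sigma_{\gamma\gamma}(\lambda-\Lambda)',\qquad \Lambda=\Sigma_{c\gamma}\Sigma_{\gamma\gamma}^{-1}.
\]
Since $\Sigma_{\gamma\gamma}\succ0$, the quadratic term is strictly positive unless $\lambda=\Lambda$. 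So $\Lambda$ is the unique minimizer of both $V$ and the worst-case bias.

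\textbf{Step 3: the minimax value.} Substituting $\lambda=\Lambda$ gives the value $\mu\sigma_R=\mu\sqrt{\sigma_c^2-\Sigma_{c\gamma}\Sigma_{\gamma\gamma}^{-1}\Sigma_{\gamma c}}$. Factoring out $\sigma_c$ yields the form $\mu\sigma_c\sqrt{1-\mathcal I}$. The range $\mathcal I\in[0,1)$ follows from positive semidefiniteness of $\Sigma_{\gamma\gamma}^{-1}$ (giving $\mathcal I\ge0$) and the strict inequality in \eqref{eq:AL_model} (giving $\mathcal I<1$).

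\textbf{Main obstacle.} The only delicate point is attainment of the supremum in Step 1, namely that $s^\star$ is an admissible score. I will rely on \eqref{eq:local_perturb} and the accompanying footnote, which treat any $s\in L_0^2(P_0)$ as a valid local direction. If $\psi_\lambda$ were unbounded, one might instead approximate $s^\star$ by bounded scores and pass to the limit; the value of the supremum is unchanged by this, so the argument goes through.
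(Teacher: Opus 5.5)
Your proposal is correct and matches the paper's proof: Cauchy--Schwarz with the aligned score $s^\star=\mu\psi_\lambda/\|\psi_\lambda\|_{L^2}$ gives worst-case bias $\mu\|\psi_\lambda\|_{L^2}$, and minimizing the strictly convex quadratic $\|\psi_\lambda\|_{L^2}^2$ yields $\Lambda$ and the stated value. The only differences are that you complete the square where the paper uses the first-order condition, and your bound $\|\psi_\lambda\|_{L^2}^2\ge\sigma_R^2>0$ settles the $\psi_\lambda\neq 0$ case the paper leaves conditional; note that uniqueness of the minimizer, in both arguments, tacitly requires $\mu>0$.
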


To sketch the proof: since the bias \eqref{eq:bias_formula_detail} is an inner product between $\psi_\lambda$ and $s$, Cauchy--Schwarz implies that worst-case bias is proportional to the $L^2$ length of $\psi_\lambda$, attained when $s$ aligns with $\psi_\lambda$. Residualization shortens the influence function by projecting out the component explained by $\phi_\gamma$, reducing worst-case bias by a factor of $\sqrt{1-\mathcal{I}}$.\footnote{$\mathcal{I}$ is the population $R^2$ from regressing $\hat c$ on $\hat\gamma$ in their joint asymptotic distribution, which \citet{andrews2020informativeness} call the \textit{informativeness} of $\hat\gamma$ for $\hat c$: it measures the share of variation in the baseline estimate that is explained by the diagnostic check. When $\mathcal{I}$ is close to one, most of $\hat c$'s sampling variation is predictable from the check, and residualization achieves a large reduction in worst-case bias. When $\mathcal{I}$ is close to zero, the check carries little information about the estimator, and residualization has little effect.} Since variance also equals $\|\psi_\lambda\|_{L^2}^2$, the same $\Lambda$ minimizes both, giving a stronger result:

\begin{corollary}[Minimax-optimality under squared error loss]\label{cor:minimax_mse}
Under \eqref{eq:AL_model}, for any bound $\mu \geq 0$ on the degree of misspecification,
\[
\inf_{\lambda} \left\{ \sup_{s \in \mathcal{S}(\mu)} \big(E_0[\psi_\lambda\,s]\big)^2 + \|\psi_\lambda\|_{L^2}^2 \right\}
\]
is uniquely attained at $\lambda^\star = \Lambda = \Sigma_{c\gamma}\Sigma_{\gamma\gamma}^{-1}$, with value $(1+\mu^2)(\sigma_c^2 - \Sigma_{c\gamma}\Sigma_{\gamma\gamma}^{-1}\Sigma_{\gamma c})$.
\end{corollary}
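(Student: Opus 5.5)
The plan is to reduce the minimax mean-squared-error objective to a scalar multiple of the asymptotic variance $\|\psi_\lambda\|_{L^2}^2$, and then invoke the variance-minimization argument already given after Corollary~\ref{cor:repair}.

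\textbf{Step 1: evaluate the inner supremum for fixed $\lambda$.} Since $\psi_\lambda = \phi_c - \lambda\phi_\gamma \in L_0^2(P_0)$, Cauchy--Schwarz gives $|E_0[\psi_\lambda s]| \le \|\psi_\lambda\|_{L^2}\|s\|_{L^2} \le \mu\|\psi_\lambda\|_{L^2}$ for every $s\in\mathcal S(\mu)$. The bound is attained as follows.
\begin{itemize}
\item If $\psi_\lambda\neq 0$, take $s^\star = \mu\,\psi_\lambda/\|\psi_\lambda\|_{L^2}$. It lies in $\mathcal S(\mu)$ because $\psi_\lambda$ has mean zero, and $\mathcal S(\mu)$ imposes no further moment restrictions. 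This is the step where unrestrictedness matters.
\item If $\psi_\lambda = 0$, both sides are zero.
\end{itemize}
Hence
\[
\sup_{s\in\mathcal S(\mu)}\big(E_0[\psi_\lambda s]\big)^2 = \mu^2\|\psi_\lambda\|_{L^2}^2 .
\]
This is precisely the computation underlying Proposition~\ref{prop:minimax}, which I may cite directly. The objective therefore becomes $(1+\mu^2)\|\psi_\lambda\|_{L^2}^2$.

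\textbf{Step 2: minimize the resulting quadratic in $\lambda$.} Expanding,
\[
\|\psi_\lambda\|_{L^2}^2 = \sigma_c^2 - 2\lambda\Sigma_{\gamma c} + \lambda\Sigma_{\gamma\gamma}\lambda' .
\]
Completing the square gives
\[
\|\psi_\lambda\|_{L^2}^2 = \sigma_c^2 - \Sigma_{c\gamma}\Sigma_{\gamma\gamma}^{-1}\Sigma_{\gamma c} + (\lambda-\Lambda)\Sigma_{\gamma\gamma}(\lambda-\Lambda)' .
\]
Since $\Sigma_{\gamma\gamma}\succ 0$ by \eqref{eq:AL_model}, the last term is strictly positive unless $\lambda=\Lambda$. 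Multiplying by the positive constant $1+\mu^2$ preserves the minimizer and its uniqueness for every $\mu\ge 0$. The minimal value is $(1+\mu^2)\sigma_R^2$, as claimed.

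\textbf{Main obstacle.} There is little real difficulty. The one point needing care is verifying that the aligning score $s^\star$ is admissible, which relies on $\mathcal S(\mu)$ being unrestricted and on $\psi_\lambda$ having mean zero. A secondary point is handling the degenerate case $\psi_\lambda = 0$. That case cannot occur at $\lambda = \Lambda$, because $\sigma_R^2 > 0$ by assumption, but the equality in Step 1 holds trivially there in any event.
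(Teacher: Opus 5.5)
Your proposal is correct and follows essentially the same route as the paper: Cauchy--Schwarz with the aligning score gives worst-case squared bias $\mu^2\|\psi_\lambda\|_{L^2}^2$ (as in Proposition~\ref{prop:minimax}), so the objective factors as $(1+\mu^2)\|\psi_\lambda\|_{L^2}^2$, and you then minimize the quadratic in $\lambda$. The only difference is cosmetic: you obtain uniqueness by completing the square around $\Lambda$, where the paper uses strict convexity and the first-order condition.
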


That is, there is no bias-variance tradeoff to navigate: $\hat c_R$ is optimal regardless of the bound $\mu$ on misspecification. Note that this double optimality is a feature of the unrestricted misspecification ball $\mathcal{S}(\mu)$.\footnote{See \citet{adusumilli2026you} for an analogous finding in a decision-theoretic framework with unrestricted forms of likelihood misspecification.} Because $\mathcal{S}(\mu)$ is a ball in $L^2(P_0)$, Cauchy--Schwarz gives worst-case bias $\mu\|\psi_\lambda\|_{L^2}$, which is a monotone function of $\|\psi_\lambda\|_{L^2}^2 = \text{Var}(\hat c(\lambda))$. The same $\Lambda$ therefore minimizes both, and the minimax squared loss objective factors as $(1+\mu^2)\|\psi_\lambda\|_{L^2}^2$, eliminating any dependence on $\mu$.

This proportionality would break if the misspecification neighborhood encoded prior knowledge about how the benchmark model might fail. Suppose, for instance, that the researcher believes the true DGP can depart from $P_0$ only along certain margins---say, the assignment mechanism may be slightly misspecified, while the outcome distribution is correctly modeled. Then the local misspecification score $s$ is not allowed to vary in an arbitrary direction in $L_0^2(P_0)$; instead, it must lie in a smaller set of directions $\mathcal V \subset L_0^2(P_0)$ corresponding to those admissible departures. Variance would remain $\|\psi_\lambda\|_{L^2}^2$, but worst-case bias would equal $\mu\|\Pi_{\mathcal V}\psi_\lambda\|_{L^2}$, where $\Pi_{\mathcal V}$ denotes the $L_{0}^2(P_0)$ projection onto $\mathcal V$. In other words, variance depends on the full size of the influence function, while worst-case bias depends only on the component of the influence function aligned with the directions in which misspecification is thought to occur. These are generally different functions of $\lambda$, so the bias-minimizing and variance-minimizing adjustments may differ, the optimal $\lambda$ would depend on $\mu$, and the researcher would face a genuine bias-variance tradeoff. The unrestricted ball avoids this issue by allowing misspecification in any direction in $L_0^2(P_0)$, so reducing the overall length of the influence function is also the unique way to reduce worst-case bias.

\section{Empirical Application}\label{sec:empirics}

We illustrate our method by applying it to the randomized controlled trial in \citet{kaur2024_financialconcerns}, who study whether financial concerns reduce worker productivity. This application is a natural fit for our framework: the paper reports a standard balance table (their Table~I) alongside its main treatment-effect estimate, providing both the estimator $\hat c$ and the diagnostic check $\hat\gamma$ needed to construct the residualized estimator $\hat c_R$. We first summarize the experimental design and main findings, then implement residualization and compare the original and residualized estimates. Finally, we examine the informativeness $\mathcal{I}$ of the balance checks for the ATE estimate, quantifying how much of the estimator's sampling variation is explained by realized covariate imbalance.

\citet{kaur2024_financialconcerns} conduct a field experiment with 408 low-income male workers in rural Odisha, India, who are hired to produce disposable leaf plates for two weeks during the agricultural lean season. Workers are paid piece rates, so productivity directly determines earnings. The authors randomize the timing of wage payments: treated workers receive an interim payment of accrued earnings partway through the contract, while control workers receive all earnings at the end. The main estimate of interest is the effect of this cash infusion on hourly output in the post-payment period, estimated via a difference-in-differences (DiD) specification at the worker-hour level with round-wave fixed effects and controls.\footnote{The authors describe these controls as ``LASSO-selected,'' but the replication code uses a fixed set (perhaps pre-computed). For simplicity, here we take this set of controls as given.} The authors report a productivity increase of 0.109 standard deviations (6.9\%) among treated workers, with effects concentrated among poorer workers (0.204 standard deviations). 

Alongside these estimates, the authors report a balance table (their Table~I, shown here in Table~\ref{tab:balance}) covering baseline demographics, labor market characteristics, wealth measures, financial worries, and pre-treatment productivity. The table reports differences in means across treatment and control for each covariate, along with $p$-values from regressions of each variable on a treatment indicator — the standard format for assessing whether randomization produced comparable groups. As the authors then report, ``The baseline characteristics do not statistically differ between the treatment and control groups overall (Table I, columns (2) and (3)), indicating a successful randomization procedure'' \citet[page 25]{kaur2024_financialconcerns}. In the language of our framework, these covariate differences constitute the diagnostic check $\hat\gamma = \bar X_1 - \bar X_0$, which should be zero in expectation under random assignment ($P_0$) but fluctuates in finite samples.

The main specification, estimated at the worker-hour level using data from the announcement date onward, is
\begin{align*}
y_{irdh}
&= \beta(\text{Cash}_i \times \text{Post-Pay}_{ird})
+ \beta_A(\text{Cash}_i \times \text{Announcement}_{ird}) \\
&\quad + \tau_P\,\text{Post-Pay}_{ird}
+ \tau_A\,\text{Announcement}_{ird}
+ X_{ird}'\eta
+ \delta_r
+ \varepsilon_{irdh},
\end{align*}
where $y_{irdh}$ is the output of worker $i$ in round-wave $r$ on day $d$ in hour $h$, $\text{Cash}_i$ indicates assignment to the interim-payment group, $\text{Post-Pay}_{ird}$ indicates the days after the interim payment was disbursed, $\text{Announcement}_{ird}$ indicates the period after the payment schedule was announced but before cash was disbursed, $\delta_r$ are round-wave (strata) fixed effects, and $X_{ird}$ is a vector of baseline controls.\footnote{Technically, controls selected via the post-double-selection LASSO procedure of \cite{belloniChern2013}. Consistent with the authors' replication code, for simplicity we will take the set of controls as given.} The coefficient of interest is $\beta$, the average treatment effect of the cash infusion on productivity in the post-payment period.

In the notation of Section~\ref{sec:setting}, the ATE estimate $\hat c = \hat\beta$ is the coefficient on $\text{Cash}_i \times \text{Post-Pay}_{ird}$ from the main specification, and the diagnostic check $\hat\gamma = \bar X_1 - \bar X_0$ is the vector of baseline covariate differences reported in the balance table. Under random assignment ($P_0$), the population imbalance satisfies $\gamma(P_0) = E_{0}[X \mid \text{Cash} = 1] - E_{0}[X \mid \text{Cash} = 0] = 0$. The residualized estimator is
\[
\hat c_R = \hat\beta - \hat\Lambda\,\hat\gamma, \qquad \hat\Lambda = \hat\Sigma_{c\gamma}\hat\Sigma_{\gamma\gamma}^{-1},
\]
where $\hat\Sigma_{c\gamma}$ is the estimated covariance between the ATE estimate and the balance statistics, and $\hat\Sigma_{\gamma\gamma}$ is the estimated covariance matrix of the balance statistics. The coefficient $\hat\Lambda$ captures how much the ATE estimate tends to move with realized covariate imbalance: if, say, a sample where treated workers happen to be wealthier also tends to produce a larger $\hat\beta$, then $\hat\Lambda$ will have a nonzero component along the wealth dimension, and residualization will subtract off this predictable component. Implementation requires only a consistent estimate of the joint covariance of $(\hat\beta, \hat\gamma)$, which can be obtained from the same variance estimation procedure used to compute standard errors for the main specification. We find the following:

\begin{table}[htbp]
\centering
\caption{Original vs.\ Residualized Estimates}\label{tab:comparison}
{\def\sym#1{\ifmmode^{#1}\else\(^{#1}\)\fi}
\begin{tabular}{@{\extracolsep{4pt}}l*{2}{>{\centering\arraybackslash}m{3.5cm}}@{}}
\toprule
 & Original & Residualized \\
 & $\hat c$ & $\hat c_R$ \\
\midrule
\addlinespace[3pt]
\multicolumn{3}{l}{\textit{Panel A. Estimates}} \\ \addlinespace[5pt]
Point estimate & 0.1090 & 0.1220 \\
Std.\ error & (0.0465) & (0.0445) \\
$t$-statistic & 2.3438 & 2.7384 \\
$p$-value & 0.0191 & 0.0062 \\
95\% CI & [0.018, 0.200] & [0.035, 0.209] \\
\addlinespace[8pt]
\multicolumn{3}{l}{\textit{Panel B. Diagnostics}} \\ \addlinespace[5pt]
Informativeness ($\mathcal{I}$) & \multicolumn{2}{c}{0.0819} \\
Bias reduction factor ($\sqrt{1-\mathcal{I}}$) & \multicolumn{2}{c}{0.9582} \\
Variance reduction (\%) & \multicolumn{2}{c}{8.19\%} \\
Correction ($\hat\Lambda\hat\gamma$) & \multicolumn{2}{c}{-0.0130} \\
Equiv. increase in sample size & \multicolumn{2}{c}{10\%} \\
\addlinespace[3pt]
\bottomrule
\end{tabular}}

\end{table}

The point estimate increases from 0.109 to 0.122 after residualization. The correction $\hat\Lambda\hat\gamma = -0.013$ is negative, meaning that the realized covariate imbalances, while small individually, were attenuating the treatment effect in the original estimate. In this particular sample, treated workers happened to draw slightly unfavorable baseline characteristics relative to control workers, and the original treatment-effect estimate was absorbing this noise, which the residualized estimator corrects for.

The combined effect of a slightly larger point estimate and a slightly smaller standard error produces a notable increase in statistical significance: the $p$-value falls from 0.019 to 0.006, the $t$-statistic rises from 2.34 to 2.74, and the lower bound of the 95\% confidence interval shifts from 0.018 to 0.035, moving well away from zero. 

The variance reduction is modest but meaningful: the standard error falls from 0.0465 to 0.0445, a reduction of 8.2\% in variance. This is consistent with the low informativeness $\mathcal{I} = 0.082$, indicating that only about 8\% of the sampling variation in $\hat c$ is linearly predictable from realized covariate imbalance. In a well-executed RCT this is reassuring---proper randomization limits the scope for balance checks to predict the estimator's sampling error. In settings with weaker designs, or where covariates are strongly predictive of outcomes (amplifying the effects of any imbalance), one would expect larger gains. That said, even the modest improvement here is equivalent to a 10\% increase in effective sample size---a meaningful gain obtained at zero cost.

Of separate importance, this residualized estimate is immune to selective reporting concerns based on the balance table. As established in Proposition~\ref{prop:selection}, $\hat c_R$ is asymptotically independent of $\hat\gamma$, so its distribution is unaffected by any rule that conditions on the balance statistics---whether that rule is ``report only if all $p$-values in Table~1 exceed 0.05,'' ``add controls if any covariate is significantly imbalanced,'' or any other function of $\hat\gamma$. The original estimate $\hat c$ does not have this property: its nonzero $\hat\Lambda$ confirms that $\Sigma_{c\gamma} \neq 0$, meaning that conditioning on balance outcomes distorts its sampling distribution, even in this well-randomized experiment. Concretely, if a researcher or referee views the original estimate more favorably when the balance table looks clean (i.e., balance statistics are small), the resulting conditional distribution of $\hat c$ differs from its unconditional distribution---as shown in Section~\ref{sec:result1_selection}, the conditional variance shrinks, leading to conservative inference. The magnitude of this distortion is governed by $\Sigma_{c\gamma}$: the stronger the co-movement between the ATE estimate and the balance outcomes, the larger the gap between the conditional and unconditional distributions of $\hat c$. Table~\ref{tab:decomposition} provides a decomposition of which covariates drive this co-movement. In this application, the largest contributions come from baseline measures of labor market activity and financial worries---variables that one would expect to predict productivity, and therefore to generate co-movement between $\hat c$ and $\hat\gamma$ when imbalanced.

\section{Conclusion}\label{sec:conclusion}

Diagnostic checks are ubiquitous in empirical research. Researchers test for covariate balance in RCTs, examine pre-trends in event studies, and assess instrument validity in IV designs. Yet these checks have traditionally been treated as external diagnostics---pass/fail hurdles that inform whether to trust an estimate, but that play no formal role in the estimate itself. This paper argues that diagnostic checks should be integrated into the estimator, and proposes a concrete method for doing so: residualize the estimator on its diagnostic check, subtracting the component of $\hat c$ that is linearly predictable from the realized check $\hat\gamma$. The residualized estimator $\hat c_R = \hat c - \hat\Lambda\hat\gamma$ requires only a consistent estimate of the joint covariance of $(\hat c, \hat\gamma)$, available already from the machinery needed for standard error calculations. Our method delivers three properties simultaneously, offering a ``free lunch'': (i) it eliminates first-order sensitivity to check-based reporting rules (Proposition~\ref{prop:selection}), (ii) recovers precision lost to nuisance variation whenever the base model is correct and the original estimator is not semiparametrically efficient (Proposition~\ref{prop:orthonecessary} and Corollary~\ref{cor:repair}), and (iii) minimizes worst-case bias under local misspecification (Proposition~\ref{prop:minimax}). Together, these imply that $\hat c_R$ is asymptotically minimax optimal under squared loss, within the class of linear adjustments, for any fixed bound on the degree of local misspecification (Corollary~\ref{cor:minimax_mse}). Within that unrestricted local misspecification problem, there is no bias-variance tradeoff to navigate, and the procedure requires no specification of pretesting rules or tuning parameters.
Our application to the RCT in \citet{kaur2024_financialconcerns} illustrates the procedure in practice. Even in a well-randomized experiment where all balance checks pass comfortably, residualization produces a nontrivial correction, shifting the point estimate and reducing standard error.

\clearpage

\setlength{\bibsep}{0pt}
\bibliography{lib}

\clearpage

\appendix\label{sec:app}

\renewcommand{\thetable}{\thesection.\arabic{table}}
\counterwithin{table}{section}

\section*{Appendix}

\section{Proofs}\label{sec:app_a}
\textsc{Proof of Proposition \ref{prop:selection}:}\label{proof:prop:selection}
By \eqref{eq:AL_model}, $\sqrt n(\hat c_R-c)$ has influence function $\phi_c-\Lambda\phi_\gamma$, so the joint influence function for $\big(\hat c_R,\hat\gamma\big)$ is
\[
\begin{pmatrix}
\phi_c-\Lambda\phi_\gamma\\
\phi_\gamma
\end{pmatrix}.
\]
Its covariance block between the two components equals
\[
E_0\!\big[(\phi_c-\Lambda\phi_\gamma)\phi_\gamma'\big]
=\Sigma_{c\gamma}-\Lambda\Sigma_{\gamma\gamma}
=\Sigma_{c\gamma}-\Sigma_{c\gamma}\Sigma_{\gamma\gamma}^{-1}\Sigma_{\gamma\gamma}
=0,
\]
so the limiting covariance is block diagonal. Joint asymptotic normality then yields asymptotic independence. \qed \vspace{1em}

\noindent\textsc{Proof of Corollary~\ref{cor:pretest_independence} (Pretest Independence):}\label{proof:pretest_independence}
Let
\[
X_n=\sqrt n(\hat c_R-c),
\qquad
T_n=\hat\Sigma_{\gamma\gamma}^{-1/2}\hat\gamma,
\qquad
B_n=\{q(T_n)\le a\}.
\]
By Proposition~\ref{prop:selection} and Slutsky's theorem, $(X_n,T_n)\xrightarrow{d}(X,Z)$, where $X\sim N(0,\sigma_R^2),$ $Z\sim N(0,I_{p_\gamma}),$ and $X$ and $Z$ are independent.
We first show that
\[
X_n\mid B_n \xrightarrow{d} N(0,\sigma_R^2).
\]
Let $B=\{q(Z)\le a\}.$ Since $q$ is continuous and $P(q(Z)=a)=0$, the indicator $1\{q(t)\le a\}$ is $Z$-almost surely continuous. Hence, for any bounded continuous function $g:\mathbb R\to\mathbb R$,
\[
E[g(X_n)1_{B_n}]
\to
E[g(X)1_B].
\]
By independence of $X$ and $Z$,
\[
E[g(X)1_B]=E[g(X)]P(B).
\]
Also,
\[
P(B_n)\to P(B)>0
\]
by the continuous mapping theorem and the assumption $0<P(q(Z)\le a)<1$. Therefore,
\[
E[g(X_n)\mid B_n]
=
\frac{E[g(X_n)1_{B_n}]}{P(B_n)}
\to E[g(X)].
\]
Since this holds for every bounded continuous $g$, it follows that
\[
X_n\mid B_n \xrightarrow{d} X\sim N(0,\sigma_R^2).
\]
Now compare the actual reporting rule $A_n$ to the idealized rule $B_n$. For any bounded continuous $g$ with $\|g\|_\infty\le M$,
\begin{align*}
\left|E[g(X_n)\mid A_n]-E[g(X_n)\mid B_n]\right|
&=
\left|
\frac{E[g(X_n)1_{A_n}]}{P(A_n)}
-
\frac{E[g(X_n)1_{B_n}]}{P(B_n)}
\right| \\
&\le
\left|
\frac{E[g(X_n)(1_{A_n}-1_{B_n})]}{P(A_n)}
\right|
+
\left|
E[g(X_n)1_{B_n}]
\right|
\left|
\frac{1}{P(A_n)}-\frac{1}{P(B_n)}
\right| \\
&\le
\frac{M\,P(A_n\Delta B_n)}{P(A_n)}
+
\frac{M\,|P(A_n)-P(B_n)|}{P(A_n)} \\
&\le
\frac{2M\,P(A_n\Delta B_n)}{P(A_n)}.
\end{align*}
Because $ P(A_n\Delta B_n)\to 0$ and $ P(B_n)\to P(B)>0,$ we also have $P(A_n)\to P(B)>0.$ Hence
\[
\left|E[g(X_n)\mid A_n]-E[g(X_n)\mid B_n]\right|\to 0.
\]
Combining this with the convergence of the conditional distribution given $B_n$ yields
\[
E[g(X_n)\mid A_n]\to E[g(X)]
\]
for every bounded continuous $g$, and therefore
\[
X_n\mid A_n \xrightarrow{d} X\sim N(0,\sigma_R^2).
\]
For the final claim, if $\sqrt{n}\hat\sigma_R\xrightarrow{p}\sigma_R$, then Slutsky's theorem implies
\[
\frac{\hat c_R-c}{\hat\sigma_R}\mid A_n \xrightarrow{d} N(0,1),
\]
so the usual one-dimensional Gaussian confidence intervals and tests based on $\hat c_R$ are asymptotically valid conditional on $A_n$.
\qed\vspace{1em}

\noindent\textsc{Proof of Proposition~\ref{prop:orthonecessary}.}\label{proof:orthonecessary}

Because $\gamma(P)=0$ for all $P\in\mathcal P$, each coordinate of the pathwise derivative of $\gamma$ vanishes along every within-model score $s\in\mathcal T_{\mathcal P}$:
\[
\dot\gamma_j(s)=E_0[\phi_{\gamma,j}s]=0,
\qquad j=1,\ldots,p_\gamma,\ \ s\in\mathcal T_{\mathcal P}.
\]
Hence each coordinate $\phi_{\gamma,j}$ lies in $\mathcal T_{\mathcal P}^{\perp}$. If $\hat c$ is semiparametrically efficient for $c(P_0)$ under the base model, then its influence function equals the efficient influence function for $c$, which lies in $\mathcal T_{\mathcal P}$. Therefore, for each $j$,
\[
E_0[\phi_c\,\phi_{\gamma,j}]=0.
\]
Stacking these identities yields
\[
\Sigma_{c\gamma}=E_0[\phi_c(D)\phi_\gamma(D)']=0. \qed
\]
\vspace{1em}

\noindent\textsc{Proof of Corollary~\ref{cor:repair} (Efficiency Repair):}\label{proof:repair}
Let $\psi_R=\phi_c-\Lambda\phi_\gamma$ and $\Lambda=\Sigma_{c\gamma}\Sigma_{\gamma\gamma}^{-1}$. Since $\gamma(P_0)=0$, the residualized estimator satisfies
\[
\sqrt n(\hat c_R-c)
=
\sqrt n(\hat c-c)-\Lambda\sqrt n(\hat\gamma-\gamma)
=
\frac{1}{\sqrt n}\sum_{i=1}^n \psi_R(D_i)+o_p(1)
\]
by \eqref{eq:AL_model}. Thus $\hat c_R$ is asymptotically linear with influence
function $\psi_R$.
It remains to be shown that $\hat c_R$ is regular for the same target $c$. Let $s\in\mathcal T_{\mathcal P}$ be any within-model score. Because $\hat c$ is regular for $c$, its influence function satisfies
\[
E_0[\phi_c\,s]=\dot c(s).
\]
Also, since $\gamma(P)=0$ for all $P\in\mathcal P$, the pathwise derivative of $\gamma$ vanishes along every within-model direction, so
\[
E_0[\phi_\gamma\,s]=\dot\gamma(s)=0.
\]
Hence
\[
E_0[\psi_R\,s]
=E_0[(\phi_c-\Lambda\phi_\gamma)s]=E_0[\phi_c\,s]-\Lambda E_0[\phi_\gamma\,s]=\dot c(s).
\]
Therefore $\psi_R$ is an influence function for $c$, so $\hat c_R$ is regular for the same target. Its large-sample variance is
\begin{align*}
E_0[\psi_R^2] &= E_0[(\phi_c-\Lambda\phi_\gamma)^2] \\
&= \sigma_c^2-2\Lambda\Sigma_{\gamma c}+\Lambda\Sigma_{\gamma\gamma}\Lambda' \\
&= \sigma_c^2 -2\Sigma_{c\gamma}\Sigma_{\gamma\gamma}^{-1}\Sigma_{\gamma c} + \Sigma_{c\gamma}\Sigma_{\gamma\gamma}^{-1}\Sigma_{\gamma\gamma}\Sigma_{\gamma\gamma}^{-1}\Sigma_{\gamma c} \\
&= \sigma_c^2-\Sigma_{c\gamma}\Sigma_{\gamma\gamma}^{-1}\Sigma_{\gamma c}.
\end{align*}
Since $\Sigma_{\gamma\gamma}\succ 0$, we have $\Sigma_{c\gamma}\Sigma_{\gamma\gamma}^{-1}\Sigma_{\gamma c}>0$
whenever $\Sigma_{c\gamma}\neq 0$. Therefore
\[
\mathrm{Var}(\hat c_R)
= \sigma_c^2 -\Sigma_{c\gamma}\Sigma_{\gamma\gamma}^{-1}\Sigma_{\gamma c} <\sigma_c^2.  \qed
\]
\vspace{1em}

\noindent\textsc{Proof of Corollary~\ref{cor:rct_variance}:}\label{proof:rct_invariance} We can show that $\hat c_R$ has weakly smaller large-sample variance than both $\hat c_S$ and $\hat c_L$: 
Let $\Sigma$ denote the joint asymptotic covariance of $(\hat c_S,\hat\gamma)$, with blocks
$\sigma_S^2=\Sigma_{c_Sc_S}$, $\Sigma_{c_S\gamma}$, and $\Sigma_{\gamma\gamma}\succ0$.
By the variance formula for linear adjustments,
\[
\mathrm{Var}(\hat c_R)
=\sigma_S^2-\Sigma_{c_S\gamma}\Sigma_{\gamma\gamma}^{-1}\Sigma_{\gamma c_S}
\le \sigma_S^2=\mathrm{Var}(\hat c_S),
\]
with equality iff $\Sigma_{c_S\gamma}=0$. Next, write $\hat c_L=\hat c_S-\hat\beta_L'\hat\gamma$ and treat $\beta_L$ as fixed at the probability limit of $\hat\beta_L$ (this suffices for first-order variance comparisons). Then
\begin{align*}
\mathrm{Var}(\hat c_L)
&=\mathrm{Var}(\hat c_S-\beta_L'\hat\gamma)\\
&=\sigma_S^2-2\beta_L'\Sigma_{\gamma c_S}+\beta_L'\Sigma_{\gamma\gamma}\beta_L\\
&=\underbrace{\bigl(\sigma_S^2-\Sigma_{c_S\gamma}\Sigma_{\gamma\gamma}^{-1}\Sigma_{\gamma c_S}\bigr)}_{\mathrm{Var}(\hat c_R)}
\;+\;(\beta_L-\beta_R)'\Sigma_{\gamma\gamma}(\beta_L-\beta_R)\\
&\ge \mathrm{Var}(\hat c_R),
\end{align*}
where $\beta_R=\Sigma_{\gamma\gamma}^{-1}\Sigma_{\gamma c_S}$, and the last inequality uses $\Sigma_{\gamma\gamma}\succ0$. Equality holds iff $\beta_L=\beta_R$. \qed \vspace{1em}

\noindent\textsc{Proof of Proposition~\ref{prop:minimax}:}\label{proof:minimax}
We first show that for any $\lambda$,
\begin{equation}
\sup_{s\in\mathcal S(\mu)}\big|E_0[\psi_\lambda s]\big|
=\mu\,\|\psi_\lambda\|_{L^2(P_0)}.
\label{eq:unrestricted_wcbias}
\end{equation}
Fix $\lambda$ and set $\psi_\lambda=\phi_c-\lambda\phi_\gamma$. By the Cauchy--Schwarz inequality, for any $s\in\mathcal S(\mu)$,
\[
|E_0[\psi_\lambda s]|\le \|\psi_\lambda\|_{L^2}\,\|s\|_{L^2}\le \mu\,\|\psi_\lambda\|_{L^2}.
\]
If $\psi_\lambda\neq 0$, equality is attained by $s^\star=\mu\,\psi_\lambda/\|\psi_\lambda\|_{L^2}$, which satisfies
$E_0[(s^\star)^2]=\mu^2$ and $E_0[\psi_\lambda\,s^\star]=\mu\|\psi_\lambda\|_{L^2}$, establishing \eqref{eq:unrestricted_wcbias}.
Given \eqref{eq:unrestricted_wcbias}, we next minimize $\|\psi_\lambda\|_{L^2}^2$ over $\lambda$:
\[
\|\psi_\lambda\|_{L^2}^2
=E_0[(\phi_c-\lambda\phi_\gamma)^2]
=\sigma_c^2-2\lambda\Sigma_{\gamma c}+\lambda\Sigma_{\gamma\gamma}\lambda',
\]
a strictly convex quadratic since $\Sigma_{\gamma\gamma}\succ0$. The first-order condition is
$-2\Sigma_{\gamma c}+2\Sigma_{\gamma\gamma}\lambda'=0$, i.e.\ $\lambda^\star=\Sigma_{c\gamma}\Sigma_{\gamma\gamma}^{-1}=\Lambda$.
Plugging in yields
\[
\min_\lambda \|\psi_\lambda\|_{L^2}^2
=\sigma_c^2-\Sigma_{c\gamma}\Sigma_{\gamma\gamma}^{-1}\Sigma_{\gamma c}
=\sigma_c^2(1-\mathcal I),
\]
which implies \eqref{eq:unrestricted_minimax_value}. \qed \vspace{1em}

\noindent\textsc{Proof of Corollary~\ref{cor:minimax_mse}:}
For any $\lambda$, the worst-case squared bias follows from Proposition~\ref{prop:minimax}:
\[
\sup_{s\in\mathcal S(\mu)}\big(E_0[\psi_\lambda\,s]\big)^2
= \mu^2\,\|\psi_\lambda\|_{L^2}^2.
\]
The asymptotic variance of $\hat c(\lambda)$ under $P_0$ is $\|\psi_\lambda\|_{L^2}^2$. The minimax squared loss objective is therefore
\[
\sup_{s\in\mathcal S(\mu)}\big(E_0[\psi_\lambda\,s]\big)^2 + \|\psi_\lambda\|_{L^2}^2
= \mu^2\,\|\psi_\lambda\|_{L^2}^2 + \|\psi_\lambda\|_{L^2}^2
= (1+\mu^2)\,\|\psi_\lambda\|_{L^2}^2.
\]
Since $(1+\mu^2)>0$ is a constant that does not depend on $\lambda$, the minimizer over $\lambda$ is the same as the minimizer of $\|\psi_\lambda\|_{L^2}^2$, which by the proof of Proposition~\ref{prop:minimax} is uniquely $\lambda^\star = \Lambda = \Sigma_{c\gamma}\Sigma_{\gamma\gamma}^{-1}$. The minimum value is
\[
(1+\mu^2)\big(\sigma_c^2 - \Sigma_{c\gamma}\Sigma_{\gamma\gamma}^{-1}\Sigma_{\gamma c}\big)
= (1+\mu^2)\,\sigma_c^2(1-\mathcal I). \qed
\]

\section{Definitions, Remarks, and Worked Examples}
\label{sec:app_b}

\noindent\textsc{Worked Example of Selection Distortion:} Let $\hat c_S = \bar Y_1 - \bar Y_0$ and $\hat c_L=\hat c_S-\hat\beta'\hat\gamma$. Suppose for simplicity that $p_\gamma=1$ and, under the base model,
\[
\begin{pmatrix}
Z_S\\
Z_\gamma
\end{pmatrix}
=
\begin{pmatrix}
\sqrt n(\hat c_S-c)/\sigma_S\\
\sqrt n(\hat\gamma-\gamma)/\sigma_\gamma
\end{pmatrix}
\Rightarrow
N\!\left(0,
\begin{bmatrix}
1 & \rho\\
\rho & 1
\end{bmatrix}\right),
\qquad
\rho=\frac{\Sigma_{c_S\gamma}}{\sigma_S\sigma_\gamma}.
\]
Then $Z_S$ admits the decomposition
\[
Z_S=\rho Z_\gamma+\sqrt{1-\rho^2}\,\varepsilon,
\qquad \varepsilon\sim N(0,1),\ \varepsilon\perp Z_\gamma.
\]
Under a ``pass'' rule $S=\mathbf 1\{|Z_\gamma|\le t\}$, the reported short estimate is drawn from the conditional distribution $Z_S\,\big|\,\{|Z_\gamma|\le t\}$, which is a non-Gaussian mixture since the term $\rho Z_\gamma$ is truncated by the event $\{|Z_\gamma|\le t\}$. In particular, its variance differs from the unconditional variance whenever $\rho\neq 0$:
\[
\Var\!\big(Z_S\mid |Z_\gamma|\le t\big)
=(1-\rho^2)+\rho^2\,\Var\!\big(Z_\gamma\mid |Z_\gamma|\le t\big)\neq 1.
\]
A similar point applies to the ``long'' specification $\hat c_L=\hat c_S-\beta_L\hat\gamma$. In standardized form,
\[
Z_L=\frac{\sqrt n(\hat c_L-c)}{\sigma_S}
=Z_S-\kappa_L Z_\gamma
=(\rho-\kappa_L)Z_\gamma+\sqrt{1-\rho^2}\,\varepsilon,
\qquad
\kappa_L=\beta_L\frac{\sigma_\gamma}{\sigma_S},
\]
so $Z_L\,\big|\,\{|Z_\gamma|>t\}$ is likewise distorted unless $\kappa_L=\rho$.
Residualization sets $\kappa_R=\rho$, equivalently $\Lambda=\Sigma_{c_S\gamma}\Sigma_{\gamma\gamma}^{-1}$, yielding
\[
Z_R=Z_S-\rho Z_\gamma=\sqrt{1-\rho^2}\,\varepsilon,
\]
which is independent of $Z_\gamma$. Thus, conditioning on any rule based on $\hat\gamma$ leaves the first-order distribution of the residualized estimator unchanged. \vspace{1em}

\noindent\textsc{Worked Example of Bias Under Local Misspecification:}\label{sec:app_bias_detail} 
We provide here a self-contained derivation of the first-order asymptotic bias formula \eqref{eq:bias_formula_detail}, clarifying why the misspecification score $s$ represents perturbations to the distribution only (not to the parameter of interest), and the condition under which the linear correction does not subtract first-order signal about the target.

\paragraph{Mean shift under local misspecification.}
Under the perturbed distribution $P_n$ defined in \eqref{eq:local_perturb}, the mean of the influence function $\psi_\lambda = \phi_c - \lambda\phi_\gamma$ shifts as follows:
\begin{align}
E_{P_n}[\psi_\lambda(D)]
&= \int \psi_\lambda(d)\left(1 + \frac{s(d)}{\sqrt{n}}\right)dP_0(d) \notag\\
&= E_0[\psi_\lambda] + \frac{1}{\sqrt{n}}E_0[\psi_\lambda \cdot s] 
= \frac{1}{\sqrt{n}}E_0[\psi_\lambda \cdot s], \label{eq:mean_shift_app}
\end{align}
where the last equality uses $E_0[\psi_\lambda]=0$. The mean of the sample average is therefore
\[
E_{P_n}\!\left[\frac{1}{\sqrt{n}}\sum_{i=1}^n \psi_\lambda(D_i)\right]
= \sqrt{n}\cdot E_{P_n}[\psi_\lambda(D)]
= E_0[\psi_\lambda \cdot s].
\]
Le Cam's third lemma \citep[Examples 6.5 and 6.7]{van2000asymptotic} formalizes this as an asymptotic distributional statement: under the drifting sequence $D_i\sim P_n$,
\[
\sqrt{n}\big(\hat c(\lambda) - c\big) \xrightarrow{d} N\!\big(E_0[\psi_\lambda \cdot s],\;\|\psi_\lambda\|_{L^2}^2\big),
\]
so the first-order asymptotic bias is $E_0[\psi_\lambda \cdot s]$.

\paragraph{Decomposition into structural and misspecification scores.}
To clarify why $s$ represents misspecification only, we connect to the framework of \citet{andrews2020informativeness} (hereafter AGS). In AGS, the distribution $F(\eta,\zeta)$ depends on a structural parameter $\eta$ and a misspecification parameter $\zeta$ (with $\zeta=0$ under the base model). Their Lemma~1 considers drifting sequences $\eta = \eta_0 + h/\sqrt{n}$, $\zeta = z/\sqrt{n}$, and shows that the asymptotic mean of the joint estimator is determined by the total score $s_h + s_z$, where
\[
s_h(d) = \frac{\partial}{\partial t_h}\log f_{h,z}(d;0,0), \qquad
s_z(d) = \frac{\partial}{\partial t_z}\log f_{h,z}(d;0,0)
\]
are the scores from perturbing $\eta$ in direction $h$ (which changes $c$) and $\zeta$ in direction $z$ (which changes the distribution but not $c$), respectively.

The minimax problem in Proposition~\ref{prop:minimax} asks: holding the true parameter $c(P_0)$ fixed, how large can the bias of $\hat c(\lambda)$ be? Since perturbations in $\eta$ change $c$, they alter the estimand, not the estimator's accuracy for a fixed target. Formally, under the combined score $s_h + s_z$, the total asymptotic mean of $\sqrt{n}(\hat c(\lambda) - c(\eta_0))$ is $E_0[\psi_\lambda(s_h + s_z)]$, while the true parameter shifts to $c'(h) = E_0[\phi_c\,s_h]$ (the pathwise derivative of $c$ in direction $h$). The bias is therefore
\begin{align}
\text{bias} &= E_0[\psi_\lambda(s_h + s_z)] - c'(h) \notag\\
&= E_0[(\phi_c - \lambda\phi_\gamma)(s_h + s_z)] - E_0[\phi_c\,s_h] \notag\\
&= E_0[\phi_c\,s_z] - \lambda\,E_0[\phi_\gamma\,s_z] - \lambda\,E_0[\phi_\gamma\,s_h] \notag\\
&= E_0[\psi_\lambda\,s_z] - \lambda\,E_0[\phi_\gamma\,s_h]. \label{eq:bias_with_sh_app}
\end{align}
The first term, $E_0[\psi_\lambda\,s_z]$, is the bias from misspecification, identical to \eqref{eq:bias_formula_detail} with $s=s_z$. The second term, $-\lambda\,E_0[\phi_\gamma\,s_h]$, reflects a potential cost of the linear correction: by subtracting $\lambda\hat\gamma$ from $\hat c$, we may inadvertently remove a component that tracks a \textit{real} change in the parameter of interest, introducing bias that is absent from the original estimator $\hat c$.
The second term in \eqref{eq:bias_with_sh_app} vanishes when the structural directions $s_h$ are within-model perturbations and the diagnostic check is identically zero on the model, i.e.\ when $\gamma(P)=0$ for all $P\in\mathcal P$. In that case, the pathwise derivative of $\gamma$ along any such structural direction is zero, so $E_0[\phi_\gamma\,s_h]=0.$ Hence, subtracting $\lambda\hat\gamma$ does not remove first-order signal about the parameter along directions that change $c$, and \eqref{eq:bias_with_sh_app} reduces to $E_0[\psi_\lambda\,s_z]$. The minimax problem over $s=s_z$ alone is therefore the relevant one for fixed-target local misspecification.

This is the same orthogonality that underlies Section~\ref{sec:result2_efficiency}. There it ensures that linear adjustments of the form $\hat c-\lambda\hat\gamma$ preserve the target's pathwise derivative along within-model directions; here it ensures that the only first-order bias to be minimized is the bias coming from misspecification directions that move the data-generating process away from the benchmark model.

\clearpage

\begin{table}[htbp]
\centering
\caption{Baseline Balance}\label{tab:balance}
{\small
{\def\sym#1{\ifmmode^{#1}\else\(^{#1}\)\fi}
\begin{tabular}{@{\extracolsep{0pt}}p{6.5cm}*{3}{>{\centering\arraybackslash}m{2.5cm}}@{}}
\toprule
 & Control Mean & Coef.\ on Cash & P-value \\
 & (1) & (2) & (3) \\
\midrule
\addlinespace[3pt]
Age & 39.188 & -0.447 & 0.596 \\
  & [8.856] & (0.841) & \\ \addlinespace[2pt]
Years of education & 4.721 & -0.060 & 0.857 \\
  & [3.540] & (0.330) & \\ \addlinespace[2pt]
Can read newspaper in Odiya & 0.630 & 0.021 & 0.670 \\
  & [0.484] & (0.048) & \\ \addlinespace[2pt]
Married & 0.984 & -0.012 & 0.468 \\
  & [0.127] & (0.016) & \\ \addlinespace[2pt]
Has any children & 0.891 & -0.038 & 0.262 \\
  & [0.313] & (0.034) & \\ \addlinespace[2pt]
Primarily daily laborer & 0.751 & -0.056 & 0.216 \\
  & [0.433] & (0.045) & \\ \addlinespace[2pt]
Days of paid work in past 7 days & 1.884 & -0.130 & 0.509 \\
  & [2.125] & (0.196) & \\ \addlinespace[2pt]
Days of paid work in past 30 days & 8.602 & 0.098 & 0.889 \\
  & [6.307] & (0.701) & \\ \addlinespace[2pt]
House quality (durable house) & 0.238 & 0.003 & 0.946 \\
  & [0.427] & (0.042) & \\ \addlinespace[2pt]
Owns farmland & 0.568 & 0.012 & 0.788 \\
  & [0.497] & (0.046) & \\ \addlinespace[2pt]
No outstanding food loans & 0.459 & -0.006 & 0.902 \\
  & [0.500] & (0.051) & \\ \addlinespace[2pt]
Can get Rs. 1K in emergency & 0.355 & -0.034 & 0.458 \\
  & [0.480] & (0.046) & \\ \addlinespace[2pt]
Worried about finances & 0.883 & -0.022 & 0.551 \\
  & [0.323] & (0.037) & \\ \addlinespace[2pt]
Worried about any loan & 0.579 & -0.031 & 0.513 \\
  & [0.495] & (0.048) & \\ \addlinespace[2pt]
Amount of loans worried about & 14624.571 & -913.348 & 0.679 \\
  & [15994.347] & (2203.768) & \\ \addlinespace[2pt]
Has loans & 0.683 & 0.027 & 0.550 \\
  & [0.467] & (0.045) & \\ \addlinespace[2pt]
Has moneylender loans & 0.175 & -0.021 & 0.572 \\
  & [0.381] & (0.037) & \\ \addlinespace[2pt]
\bottomrule
\end{tabular}}

}
\end{table}

\begin{table}[htbp]
\centering
\caption{Covariate Contributions to the Residualization Correction}\label{tab:decomposition}
{\small
{\def\sym#1{\ifmmode^{#1}\else\(^{#1}\)\fi}
\begin{tabular}{@{\extracolsep{0pt}}p{6.5cm}*{3}{>{\centering\arraybackslash}m{2.5cm}}@{}}
\toprule
 & $\hat\Lambda_k$ & $\hat\gamma_k$ & $\hat\Lambda_k \hat\gamma_k$ \\
\midrule
\addlinespace[3pt]
Age & -0.0014 & -0.4461 & 0.0006 \\
\addlinespace[2pt]
Years of education & 0.0173 & -0.0596 & -0.0010 \\
\addlinespace[2pt]
Can read newspaper in Odiya & -0.0227 & 0.0205 & -0.0005 \\
\addlinespace[2pt]
Married & 0.0399 & -0.0119 & -0.0005 \\
\addlinespace[2pt]
Has any children & 0.1320 & -0.0378 & -0.0050 \\
\addlinespace[2pt]
Primarily daily laborer & 0.1156 & -0.0559 & -0.0065 \\
\addlinespace[2pt]
Days of paid work in past 7 days & 0.0349 & -0.1295 & -0.0045 \\
\addlinespace[2pt]
Days of paid work in past 30 days & -0.0090 & 0.0975 & -0.0009 \\
\addlinespace[2pt]
House quality (durable house) & -0.0164 & 0.0029 & -0.0000 \\
\addlinespace[2pt]
Owns farmland & -0.0193 & 0.0124 & -0.0002 \\
\addlinespace[2pt]
No outstanding food loans & -0.0561 & -0.0062 & 0.0003 \\
\addlinespace[2pt]
Can get Rs. 1K in emergency & -0.0190 & -0.0342 & 0.0006 \\
\addlinespace[2pt]
Worried about finances & -0.0563 & -0.0219 & 0.0012 \\
\addlinespace[2pt]
Worried about any loan & -0.0834 & -0.0312 & 0.0026 \\
\addlinespace[2pt]
Amount of loans worried about & -0.0000 & -856.2956 & 0.0029 \\
\addlinespace[2pt]
Has loans & 0.0116 & 0.0268 & 0.0003 \\
\addlinespace[2pt]
Has moneylender loans & 0.1242 & -0.0211 & -0.0026 \\
\addlinespace[2pt]
\midrule
Total correction & & & -0.0130 \\
\addlinespace[3pt]
\bottomrule
\end{tabular}}

}
\end{table}

\end{document}